%% file: main.tex
\documentclass[sigplan,10pt,nonacm=true]{acmart}
\renewcommand\footnotetextcopyrightpermission[1]{}
\AtBeginDocument{%
 }

\input{macros}
\setcopyright{none}

\begin{document}


\title{Scalable datacenter replication with mostly-synchronous consensus on hardware}

\author{Davide Rovelli}
\email{roveld@usi.ch}
\orcid{0000-0002-6881-3126}
\affiliation{%
  \institution{Universit\`a della Svizzera Italiana (USI)}
  \city{Lugano}
  \country{Switzerland}
}

\author{Philipp Berdesinski}
\email{philipp@berdesinski.dev}
\orcid{0009-0009-2155-2633}
\affiliation{%
  \institution{turbalance}
  \city{Heidelberg}
  \country{Germany}
}

\author{Rodrigo Otoni}
\email{r.b.otoni@rug.nl}
\orcid{0000-0003-1097-2367}
\affiliation{%
  \institution{University of Groningen}
  \city{Groningen}
  \country{Netherlands}
}

\author{Patrick Eugster}
\email{eugstp@usi.ch}
\orcid{0000-0003-3864-9078}
\affiliation{%
  \institution{Universit\`a della Svizzera Italiana (USI)}
  \city{Lugano}
  \country{Switzerland}
}
\renewcommand{\shortauthors}{D. Rovelli et al.}

\input{sections/abstract}


\maketitle

\input{sections/introduction}
\input{sections/model}
\input{sections/design}
\input{sections/algorithm}

\input{sections/implementation}
\input{sections/evaluation}
\input{sections/relatedWork}
\input{sections/conclusion}


\bibliographystyle{ACM-Reference-Format}
\bibliography{references_v2}

\iftoggle{withAppendix}{%
    \appendix
    \pagebreak
    \input{sections/appendix}

    }{%
}

\end{document}

%% file: macros.tex
\usepackage{acronym}
\usepackage{enumitem}
\usepackage{comment}
\usepackage{hyperref}

\usepackage[ruled,vlined]{algorithm2e}
\SetKwBlock{To}{to}{}
\SetKwBlock{Upon}{upon}{}
\SetKwBlock{Uses}{Uses}{}
\SetKw{Kwand}{and}
\SetKw{Kwor}{or}
\DontPrintSemicolon

\newcommand{\algobox}[4]{\tikz[baseline=(box.base)]{\node[draw=#1, fill=#2, #3, inner sep=1pt] (box) {#4};}}
\newcommand{\algoboxAE}[1]{\algobox{purple}{pink!80}{solid}{#1}}
\newcommand{\algoboxLK}[1]{\algobox{blue}{blue!20}{dotted}{#1}}
\newcommand{\algoboxDQ}[1]{\algobox{red}{yellow!40}{dashed}{#1}}

\usepackage{amsthm}

\theoremstyle{definition}

\newtheorem{observation}{Observation}[section]

\newtheorem{assumption}{Assumption}[section]

\newtheorem{lemma}{Lemma}[section]

\newtheorem{theorem}{Theorem}[section]

\usepackage{caption}
\usepackage{subcaption}

\usepackage{tabularx}
\usepackage{booktabs}
\usepackage{makecell}
\usepackage[table]{xcolor}
\usepackage{colortbl}
\usepackage{array}
\definecolor{lighttgray}{gray}{0.9}
\newcolumntype{L}[1]{>{\raggedright\arraybackslash}p{#1}}
\newcolumntype{g}[1]{>{\columncolor{lightgray}\raggedright\arraybackslash}p{#1}}

\newtoggle{withAppendix}
\toggletrue{withAppendix}

\newtoggle{ArxivVersion}
\toggletrue{ArxivVersion}

\newacro{name}[scarHW]{scalable replication in-hardware}
\newacro{sync}[sync]{synchronous}
\newacro{fast}[timely]{timely}
\newacro{sdcl}[SDCL]{\acl{sync} datacenter layer}
\newacro{async}[async]{asynchronous}
\newacro{nic}[NIC]{network interface controller}
\newacro{cpu}[CPU]{central processing unit}
\newacro{bft}[BFT]{Byzantine fault tolerance}
\newacro{sdn}[SDN]{software-defined network}
\newacro{te}[TE]{traffic engineering}
\newacro{utb}[UTB]{upper time bound}
\newacro{mtu}[MTU]{maximum transmission unit}
\newacro{ipi}[IPI]{inter-processor interrupt}
\newacro{fd}[FD]{failure detector}
\newacro{rdma}[RDMA]{remote direct memory access}
\newacro{tor}[TOR]{top-of-rack}
\newacro{roce}[RoCE]{RDMA over converged Ethernet}
\newacro{dpdk}[DPDK]{data plane development kit}
\newacro{rpc}[RPC]{remote procedure call}
\newacro{vm}[VM]{virtual machine}
\newacro{os}[OS]{operating system}
\newacro{lkm}[LKM]{Linux kernel module}
\newacro{xdp}[XDP]{express data path}
\newacro{irq}[IRQ]{interrupt request}
\newacro{bpf}[BPF]{Berkeley packet filter}
\newacro{ebpf}[eBPF]{extended Berkeley packet filter}
\newacro{rcu}[RCU]{read-copy-update}
\newacro{skb}[SKB]{socket kernel buffer}
\newacro{zab}[ZAB]{Zookeeper Atomic Broadcast}
\newacro{smr}[SMR]{state machine replication}
\newacro{mpi}[MPI]{message passing interface}
\newacro{snic}[smartNIC]{smart network interface controller}
\newacro{ipu}[IPU]{infrastructure processing unit}
\newacro{soc}[SoC]{system on a chip}
\newacro{gst}[GST]{global stabilization time}
\newacro{pcie}[PCIe]{peripheral component interconnect express}
\newacro{fnic}[FPGA smartNIC]{FPGA smartNIC}
\newacro{mpps}[Mpps]{million packets per second}
\newacro{redisnc}[Redis-NC]{Redis-NC}
\newacro{zookeepernc}[Zookeeper-NC]{Zookeeper-NC}
\newacro{tob}[TOB]{total order broadcast}
\newacro{ptp}[PTP]{precision time protocol}
\newacro{rtt}[RTT]{round-trip time}
\newacro{conic-l}[scarHW-L]{scarHW leader}
\newacro{conic-lb}[scarHW-B]{scarHW load-balanced}
\newacro{conic-us}[scarHW-u]{scarHW microservice}
\newacro{fpga}[FPGA]{field programmable gate array}
\newacro{xrt}[XRT]{Xilinx runtime}
\newacro{axi}[AXI]{advanced extensible interface}
\newacro{axilite}[AXI4-lite]{AXI4-lite}
\newacro{axifull}[AXI4]{AXI4 full}
\newacro{axistream}[AXI-stream]{AXI-stream}
\newacro{mm2s}[MM2S]{memory-to-stream}
\newacro{s2mm}[S2MM]{stream-to-memory}
\newacro{ckc}[POPUC]{popular uniform consensus}
\newacro{ckcuc}[POPUC
]{popular uniform consensus}
\newacro{ckcucmi}[POPUC\textsubscript{mi}]{popular uniform consensus for multiple instances}
\newacro{ckcfv}[POPUC-fv]{popular uniform consensus with full value set relay}
\newacro{ckctob}[POPTOB]{popular total order broadcast}
\newacro{cc}[co-consensus]{collaborative consensus}
\newacro{vr}[VR]{Viewstamped Replication}

\newcommand{\tool}[1]{\textsc{#1}}

\newcommand{\apalache}{\tool{Apalache}}

\newcommand{\prop}[1]{\textnormal{\textit{#1}}}
\newcommand{\prim}[1]{\textsc{#1}}
\newcommand{\algprim}[3][]{\prim{#2}\ensuremath{_{{\textup{#1}}}#3}}
\newcommand{\propprim}[1]{\prim{#1}}
\newcommand{\algval}[1]{\prim{#1}}

\usepackage[disable]{swnotes}
\newcommand{\dvrnote}[1]{\annote{#1}{dvr}{yellow}}
\newcommand{\pnote}[1]{
\annote{#1}{p@}{orange}
}

\newcommand{\rnote}[1]{\annote{#1}{ro}{green}}

\acused{os}

\newcommand{\mymin}[2]{\ensuremath{\min^{#1}_{#2}}}

%% file: sections/abstract.tex
\begin{abstract}

Consistent replication of data among distributed processes -- a task involving the well-known consensus problem -- is notoriously expensive and hard to scale, affecting especially datacenter services with stringent performance requirements.
To mitigate this problem, we introduce \emph{\ac{name}}: a network card design that improves throughput and latency of consistent replication even when increasing the number of replicas, whereas current systems operate at a small scale or with relaxed consistency guarantees. 
At the heart of \ac{name} is our novel \acs{ckcuc} consensus algorithm, implemented in an \acs{fpga} \acs{snic} to take full advantage of the ``mostly synchronous'' behavior of programmable network devices in the datacenter. 
Unlike widely-adopted ``mostly asynchronous'' coordination protocols such as Paxos or leaderless alternatives, \acs{ckcuc} implements a generalized variant of consensus dubbed \emph{\acl{cc}} which allows for several simultaneous decisions, achieving great scalability without compromising availability. 
\acs{ckcuc} preserves safety guarantees in the presence of process crash-stop and message send/receive omission failures (capturing incidental asynchrony) and has been formally specified and verified in TLA$^+$.
Our FPGA prototype improves throughput and latency of widely-used services Redis and Zookeeper by up to two orders of magnitude compared to the state of the art.
\ac{name}-based services also achieve zero downtime upon failure of a minority of replicas, offering a highly-robust, wire-speed, scalable replication system.

\end{abstract}

%% file: sections/introduction.tex
\section{Introduction} \label{sec:introduction}


\Ac{smr} is a mechanism at the heart of countless highly-available services which need to ensure a consistent state across 
nodes. 
These services include heavily-demanded
online applications running in datacenters, such as for coordination
~\cite{zookeeper}, 
event streaming
~\cite{kreps2011kafka}, 
data 
storage~\cite{taft2020cockroachdb, zhou2021pull}, key-value storage~\cite{etcd, redisraft}, memory disaggregation
~\cite{aceso_disaggmem_Hu_SOSP24, hydra_disaggmem_Lee_FAST22}, and resilient large language model training~\cite{llm_async_checkpoint_Mauriya_HPDC24}.
Because they form the backbone of critical infrastructure, \ac{smr} implementations need to be robust to failures. This leads to the long-standing challenge of 
achieving fault-tolerant coordination among distributed processes, i.e., 
achieving consensus, \emph{efficiently}.

\paragraph{Constraints of asynchronous protocols.}
Current coordination services still predominantly assume the ``mostly asynchronous'' partial synchrony model~\cite{DWLYST88} in which, prior to a (unknown) \ac{gst}, 
computing and communication times are unbounded due to arbitrary resource contention on end nodes and in the network.
This assumption introduces several costs in protocol design
, which the community has embraced as the inevitable price to pay for fault tolerance. 
The vast majority of state-of-the-art \ac{smr} services thus rely on variants of popular (mostly) asynchronous protocols Paxos~\cite{paxos} and Raft~\cite{raft}. Recent works propose 
optimizations 
using modern datacenter technologies such as programmable switches~\cite{Dang2015NetPaxos,Li2016NOPaxos}, \acp{snic} with \acp{fpga}~\cite{waverunner, consensus_in_a_box, nanopu, paxos_in_the_nic},\acused{nic} \ac{rdma}~\cite{mu, p4ce_dulong_icdcs24, ukharon}, and \ac{os}-bypass frameworks~\cite{electrode, xlane}. 
While 
boosting performance,  
most such solutions inherit 
the fundamental limitations of the underlying system model and algorithms. 
Crucially, 
by funneling replication requests through a \emph{leader process}, they induce three major constraints:

\begin{enumerate}[label=\textbf{C\arabic*}.,ref=\textbf{C\arabic*}]
    \item\label{itm:bottleneck} \textbf{Single-process bottleneck:} 
    the leader processes client requests, forwards them to all replicas and receives acknowledgments from a majority of them. 
    Resources available at the leader significantly constrain the performance of the whole cluster~\cite{p4ce_dulong_icdcs24, paxos_variants_perf_murat_SIGMOD19, wpaxos_ailijiang_murat_2020}.
    \item\label{itm:ft} \textbf{Low scalability and fault tolerance:} The leader bottleneck incurs particularly rapid performance degradation as the number of replicas increases. System architects mitigate this issue by restricting replication to few  nodes~\cite{zookeeper_admin, kubernetes_etcd_operations}. This, however, strongly limits the number of tolerated failures 
    given that a correct majority quorum is needed for liveness.
    %
    
    \item\label{itm:downtime} \textbf{Service downtime:} During leader election or cluster reconfiguration the system cannot process requests. 
\end{enumerate}

\paragraph{Multi-leader and leaderless SMR}
Other classical consensus protocols on top of mostly asynchronous models, e.g., rotating coordinator~\cite{chandra1996unreliable}, may rely on leaders less (explicitly),  
but exhibit other constraints~\cite{paxos_vs_rotatingleader_urban_schiper_2004,mostefaouiraynal2001leader}.
More popular solutions to circumvent \ref{itm:bottleneck} and \ref{itm:ft} include multi-leader versions of Paxos which use small replication groups/shards with consensus mechanisms on top to coordinate group management~\cite{wpaxos_ailijiang_murat_2020,cockroach_db,VPaxos_LamportMZ09}, and others attempts at decentralization~\cite{Marandi2010RingPaxos,EgalitarianPaxos_Moraru2013,Derecho_Jha-Birman2019}. Alas, such approaches pay the price of loosening consistency guarantees (e.g., sharding implies partial instead of total order), or degrading \ref{itm:downtime} and 
response latency in the face of contention~\cite{EgalitarianPaxos_Moraru2013} or stragglers~\cite{Derecho_Jha-Birman2019, ChoraSync_LiArxiv25}. 

\paragraph{Mostly synchronous datacenters.}
We instead propose to tackle \ref{itm:bottleneck}-\ref{itm:downtime} more fundamentally by building a solution ground-up on top of a different system model. We start from the observation that programmable datacenter networks can provide \emph{stable, low time bounds}
~\cite{SpecPaxos_PortsNSDI2015, The_Synchronous_Datacenter_Yang_2019
}, further supported by the rise of ultra-low latency network protocols claiming bounded datacenter communication latency~\cite{fastpass_Perry_sigcomm14,xlane}. Moreover, now widely-available \acp{snic} can significantly strengthen synchrony in practice through fast, uninterrupted packet processing at the endhost~\cite{Nanoconsensus_RovelliSoCC25}.
However, though rare, synchrony violations have to be taken into account, as a single violation can otherwise lead to inconsistencies. 


\paragraph{ScarHW: \acl{cc} on \ac{fpga}.}
In light of these insights, we propose
the \emph{\acf{name}}: a network card design providing a robust and efficient \ac{smr} service offloaded to an \ac{fpga} 
in order to leverage the predictability and speed of datacenter networks.
At the heart of \ac{name} is the  \ac{ckcuc}: a \emph{novel consensus algorithm} which 
exploits ``mostly \emph{synchronous}'' systems for performance while ensuring fault tolerance in the presence of process crash-stop as well as \emph{message send/receive omission failures} to capture any 
incidental late or lost messages. 
By relying on a correct majority quorum, \ac{ckcuc} has the same message complexity, safety guarantees, and actual fault tolerance as Paxos (including when a majority fails, cf. \autoref{tab:comp}), but has 
crucial advantages. %
Most importantly, \ac{ckcuc} is  
leaderless, solving a variant of consensus which we call \emph{\ac{cc}}, that allows several  \emph{simultaneous} decisions from multiple propositions.  
This approach allows client requests to be 
issued 
across replicas 
without impacting latency, unlike classic batching strategies (which \ac{name} can also use). By harnessing  
\emph{load distribution} and hardware acceleration, \ac{name}-based services thus allow the increase of fault tolerance by adding more replicas while also \emph{maintaining or even improving throughput and latency of consistent replication}, overcoming \ref{itm:bottleneck} and \ref{itm:ft}. However, unlike asynchronous leaderless protocols, \ac{name} does not compromise latency, does not require partitioning/sharding (which would make it even more scalable) and achieves \emph{negligible service downtime} up to $t = \lfloor\frac{N-1}{2}\rfloor$ failures, eliminating \ref{itm:downtime}. 
In short, by leveraging hardware support with our system/failure model, \ac{name} reaps all the 1. pros of synchrony without suffering any 2. cons often attributed with it: 1. inherently leaderless coordination without 2. rigid lock-step execution (early termination in absence of failures) or conservatively high latency bounds (stable maximum interaction latencies below average of asynchronous approaches).

\begin{table*}[t!]
{ 
\small
\caption{
\ac{name} vs state-of-the art \ac{smr} approaches 
w.r.t.: 
consensus variant, system and 
failure models,  model (and mechanisms in algorithms) handling  
disruptions, 
failure bound and guarantees.   
$|Q|\geq t+1 =\lceil\frac{N+1}{2}\rceil$, \mymin{f}{t}=$\min(f+2,t+1)$.
}\label{tab:comp}

\setlength{\tabcolsep}{
.85mm}
\begin{tabular}{c|c|cc|ccc|c|cc}
\toprule 
{\bf Approach} & {\bf Consensus} & \multicolumn{2}{c|}{\bf Model} &\multicolumn{3}{c|}{\bf Model handling disruption} & {\bf Quorum} & \multicolumn{2}{c}{\bf Guarantees} \\
& {\bf variant} & \makecell{\bf System} & \makecell{\bf Failure} & \makecell{\bf Process fault} & \bf Delay &\makecell{\bf Message loss} & \bf size & \bf Safety & \bf Liveness  \\
\midrule

\makecell{\bf Standard\\\bf\ac{smr}} & \makecell{Selective\\(1 decision)} & 
\makecell{Partially\\synchronous} & \makecell{Process\\crash-stop} & \makecell{Failure\\(quorums)} & 
\makecell{System\\(quorums)} & 
\makecell{System\\(quorums/\\resend)} & 
\makecell{$t+1$ (timely $>$\\ GST)  commun.\\ processes}  & 
\makecell{Always\\ (blocking\\ with $f>t$)}
&  \makecell{$>$GST}
\vspace{1mm}\\

\textbf{\ac{name}} & \makecell{Collaborative\\($|Q|$ decisions)} &
\makecell{Synchronous}&
\makecell{Process\\crash-stop,\\ omission}&
\makecell{Failure\\(quorums)}&
\makecell{Failure\\(drop\\cf. loss)}&
\makecell{Failure\\(quorums)}&
\makecell{$t+1$ timely\\communicating \\processes}
& 
\makecell{Always\\ (blocking\\ with $f>t$)} &
\makecell{\mymin{f}{t}\\rounds} \\
\bottomrule
\end{tabular}
}
\end{table*}

Our prototype -- implemented on an AMD/Xilinx Alveo U50~\cite{AMD_Alveo_U50} \ac{snic} -- can saturate a 100Gbps network with worst-case 5.4$\mu$s response latency. With more than 3 replicas, \ac{name} delivers up to 2.5$\times$ higher replication 
throughput than Raft-based Waverunner~\cite{waverunner}, the fastest (to 
our knowledge) hardware \ac{smr} service. 
The full potential of \ac{name} emerges when integrated into microservices: when scaling up to 59
nodes, \ac{name} improves Redis by up to two orders of magnitude in both throughput and latency compared to NOPaxos~\cite{Li2016NOPaxos}.

\paragraph{Contributions.}
In summary, this paper

\begin{itemize}

    \item motivates the mostly synchronous system model with process crash-stop \emph{and} send/receive omission failures as a better match for high performance datacenter services than traditional mostly asynchronous models (or synchronous models) with only crash failures; (\autoref{sec:model});  
    
    \item proposes the design of \ac{name}: a novel, flexible network card design which harnesses the power of distribution and hardware acceleration (\autoref{sec:design});
    
    \item specifies \ac{ckcuc}: \ac{name}'s novel leaderless protocol solving \ac{cc}, whose correctness has been formally verified via TLA$^+$ model checking (\autoref{sec:algorithm});

    \item presents the implementation of our prototype on the  AMD/Xilinx Alveo U50~\cite{AMD_Alveo_U50} \ac{fpga} \ac{snic} (\autoref{sec:implementation});
    
    \item empirically evaluates \ac{name} in terms of raw performance, impact for real-world applications, and fault tolerance (\autoref{sec:eval}).  
    We show that \ac{name}-based services are not affected by traditional bottlenecks (\ref{itm:bottleneck}, \ref{itm:ft}, \ref{itm:downtime}), achieving unprecedented performance.

\end{itemize}

\autoref{sec:relatedWork} presents related work. \autoref{sec:conclusion} draws conclusions. 
\iftoggle{withAppendix}{
\autoref{apx:algorithms} details 
\ac{ckcuc} and algorithms for \ac{smr} built on top of it, and argues for correctness.
}
{
Supplementary material details \ac{ckcuc} and algorithms for \ac{smr} built on top of it, and argues for correctness.
}

%% file: sections/model.tex
\section{System and Failure Model}\label{sec:model}


\subsection{Specification}
We consider a message-passing distributed system with a set $\Pi$ of $N = |\Pi|$ processes, also called replicas. 
We consider mostly synchronous systems, assuming two bounds   $\Delta_I$ and $\Delta_D$  \emph{for performance} only, not 
correctness.
$\Delta_I$ bounds most interaction latency between any two processes $p_i, p_j \in \Pi$ , i.e., communication between processes together with end-to-end 
processing happens most of the time within $\Delta_I$. 
$\Delta_D$\dvrnote{we could remove these Deltas as they are not used later on to define round time anymore} bounds clock drifts between two processes, which is easily achieved in practice and assumed by state-of-the-art datacenter coordination works, including for correctness, e.g.~\cite{ukharon}.

Processes can suffer from crash-stop failures, and omission failures 
defined below.

\begin{description}[font=\textit]
    \item[Omission failure:] 
    Processes fail to send or receive messages. Omission failures capture messages dropped (e.g., due to network faults) or  with  latency 
    $>\Delta_I$ (which are discarded). 
\end{description}


A process is \textit{faulty} if it commits a crash-stop failure or at least one omission failure, otherwise it is \textit{correct}. 
We assume a majority quorum of \emph{correct} processes exists such that $|Q| \ge t + 1$ and $N = 2t + 1$, where $t$ is the maximum numb/er of faulty processes. We use $f$ to denote the number of \emph{actually} faulty processes.\dvrnote{mention that other version loosen the assumption to a sync quorum?}
The majority quorum assumption is proven to be necessary to solve uniform consensus in synchronous systems with omission failures~\cite{sync_consensus_summary_raynal, omission_sync_uniform_consensus:SPAA04}. 
When $f > t$, our consensus protocol remains safe but not live (i.e., it may block), as detailed \mbox{in~\autoref{sec:algorithm}}. 

\subsection{Paradigm shift to better model datacenters}\label{sec:shift}

Apart from a small number of algorithms explored more in theory~\cite{CF99,agreement_comm_faults_toueg, sync_agreement_ubiquitous_faults_santoro, sync_consensus_summary_raynal, sync_consensus_hybrid_faults_biely, kset_agreement_sync_omission_parvedy_raynal}, distributed coordination algorithms developed over the past 20+ years largely focused on crash-stop failures \emph{without omissions} (unless as part of Byzantine failures) in the partial synchrony model based on a conservative view: interactions between distributed nodes can be highly unreliable because endhost processing times can be unbounded due to resource contention, and networks can arbitrarily delay or drop packets. 
This leads to the common \emph{mis}belief that assuming synchrony in the presence of %
crash-stop  
failures \emph{must} be unsafe, since a single delayed packet can be misinterpreted as a failure and hamper consistency. 
Yet, solutions including \emph{omission failures} in the  \emph{failure model}, like \ac{name}, 
can handle such synchrony violations (see \autoref{tab:comp} for comparison with the standard \ac{smr} model).

Such  solutions could be deemed impractical as \emph{frequent} packet delays or drops captured as failures could quickly degrade the availability of a cluster~\cite{sync_consensus_hybrid_faults_biely}. While this may be the case in 
some systems, recent advances seriously challenge this common assumption \emph{for datacenters}~\cite{SpecPaxos_PortsNSDI2015,The_Synchronous_Datacenter_Yang_2019, DCCs_RovelliEugster_DSN25}. 
Modern programmable datacenter hardware, e.g., \ac{fpga} \acp{snic}, namely not only allows for very fast access to the network (as exploited to achieve ultra-low tail latency in coordination~\cite{waverunner, consensus_in_a_box, nanopu, ukharon, mu}), but also make communication and computation times very deterministic.   
Several state-of-the-art datacenter coordination services recently \emph{enforce} synchrony for a subset of a datacenter, claiming reliable upper bounds on latency achieved through isolated  \ac{os} packet processing pipelines on commodity \acp{nic} and network redundancy to overcome packet losses~\cite{fide,Nanoconsensus_RovelliSoCC25}. However, even if strongly reducing the probability of network failures being witnessed, these can not be fully ruled out.

We propose a fundamental model shift for datacenter applications, based on the intuition that failure models (rather than system models) should focus on rare cases like message losses and latency outliers.   
We use the synchronous system model to better represent interaction reliability among \acp{name} with (in practice few) glitches captured as omission failures, rather than assuming asynchrony, pessimistically, as the norm, or then perfect synchrony.
This allows \ac{name} to reap all the 1. pros of synchrony without suffering any 2. cons often attributed with it: 1. inherently leaderless coordination without 2. rigid lock-step execution or conservatively high latency bounds as substantiated in the following.

\subsection{How practical 
is ``mostly synchronous''?
}
\label{sec:practical_sync}

We offload \ac{name} to \ac{fpga} \acp{snic} to fully leverage the predictability and speed of datacenter networks. 
To assert the synchrony assumptions of our model, we ran initial evaluations to quantify the reliability of modern network hardware.
The microbenchmark consists in  
a simple ping-pong protocol between two nodes connected via a \acl{tor} switch (see \autoref{sec:eval_setup} for hardware specification). 
We compute the one-way latency for every packet as half of the \ac{rtt}. We ran our benchmark in a production datacenter of a major cloud provider\footnote{Anomymized for double-blind submission.} for 40 consecutive days collecting measurements for a total of 115 billion packets. 
For \ac{name}, we sent packets at a constant throughput of 10KHz and measured the latency including the hardware packet processing pipeline. We also evaluate latency of two modern high-performance 
networking 
solutions: \ac{rdma} unreliable connection (UC) and \ac{xdp} AF\_XDP sockets through the same ping-pong test, but using a much smaller sample base of 10 million packets (advantaging them over \ac{name}). We add UDP as a baseline. Throughout 
the experiment, we periodically applied up to 100\% network traffic and CPU load using \texttt{iperf3} and \texttt{stress-ng}. We manually reserved high-priority queues on 
switches for the benchmark traffic and left best-effort queues for the other traffic to replicate the effect of low-latency communication protocols~\cite{fastpass_Perry_sigcomm14,qjump,xlane
}.

\begin{figure}[t]
    \centering
    \includegraphics[width=\linewidth]{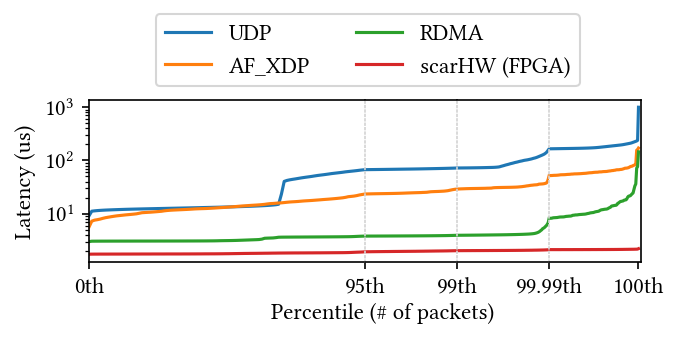}

    \caption{Latency of different network stacks at the endhost.}
    \label{fig:stable_latency}
\end{figure}

\autoref{fig:stable_latency} shows that pushing logic to dedicated processing resources close to the wire leads to extremely stable latency of a few $\mu$s while approaches involving software show spikes towards the tail. 
\ac{name} interactions register no outlier over 4.41$\mu$s and no packet drops, suggesting that upper time bounds can be confidently established with $\mu$s margins. \ac{name}'s maximum interaction latency measured is even below the minimum of the other approaches, including \ac{rdma}.

The use of synchrony as default in practice is further motivated by a wide range of available options for ultra-reliable network communication which can 
enable bounded-latency communication, allowing datacenter operators to navigate the tradeoff between network utilization and reliable traffic~\cite{
The_Synchronous_Datacenter_Yang_2019,Nanoconsensus_RovelliSoCC25}. 
Moreover, while reliable communication requires resource reservation in \ac{name}, deployment of targeted critical services in practice with \emph{current} solutions includes comparable yet error-prone \emph{manual} reservation of resources~\cite{zookeeper_admin, etcd}, 
e.g., bandwidth or priority queues, to decrease the possibility of congestion for prime performance.  

In the following we use high-priority queues reserved on switches (like for \autoref{fig:stable_latency}), 
to showcase how this simple mechanism can effectively support robust synchrony  despite concurrent regular traffic. However, it is important to note that \emph{traffic prioritization is not required for safety}. \ac{name} guarantees consistent replication despite any number of omission failures, including delays, packet drops, and network faults, allowing fine-tuning of upper time bounds to optimize performance without compromising safety.
Although asynchronous models are more tolerant of short transient omission failures (which are rare in well-provisioned environments), \ac{name} offers the distinct advantage of providing hardware-precise notifications of unexpected delays. This signaling can be leveraged for congestion control, rapid response to network partitions, or detecting a cluster stall when more than $t$ failures occur (cf. \autoref{sec:algo_comparison}). Combining \ac{name} with further techniques like traffic engineering as in prior works~\cite{xlane,fide} 
could further improve performance. 

%% file: sections/design.tex
\section{Design} \label{sec:design}



\subsection{Architecture}\label{sec:architecture}

\begin{figure}
    \centering
    \begin{subfigure}{\linewidth}
        \includegraphics[width=\linewidth]{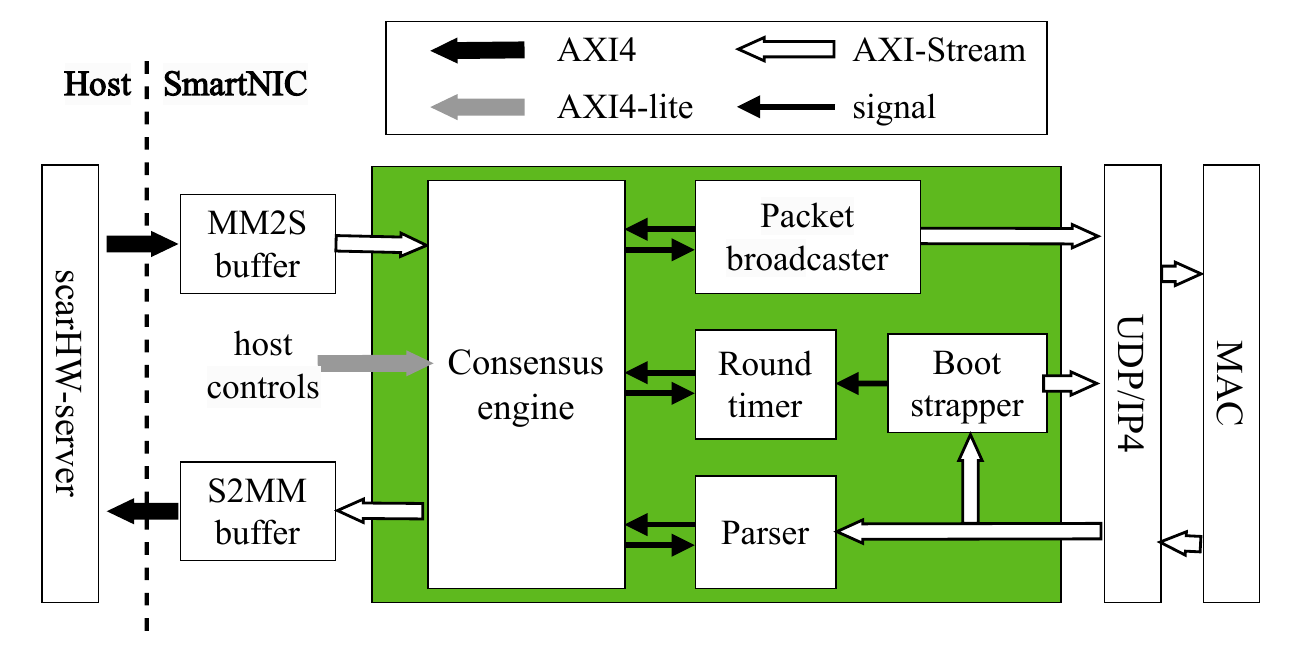}
        \caption{\ac{name} vanilla}
        \label{fig:architecture_diagram_vanilla}
    \end{subfigure}
    \begin{subfigure}{\linewidth}
        \includegraphics[width=\linewidth]{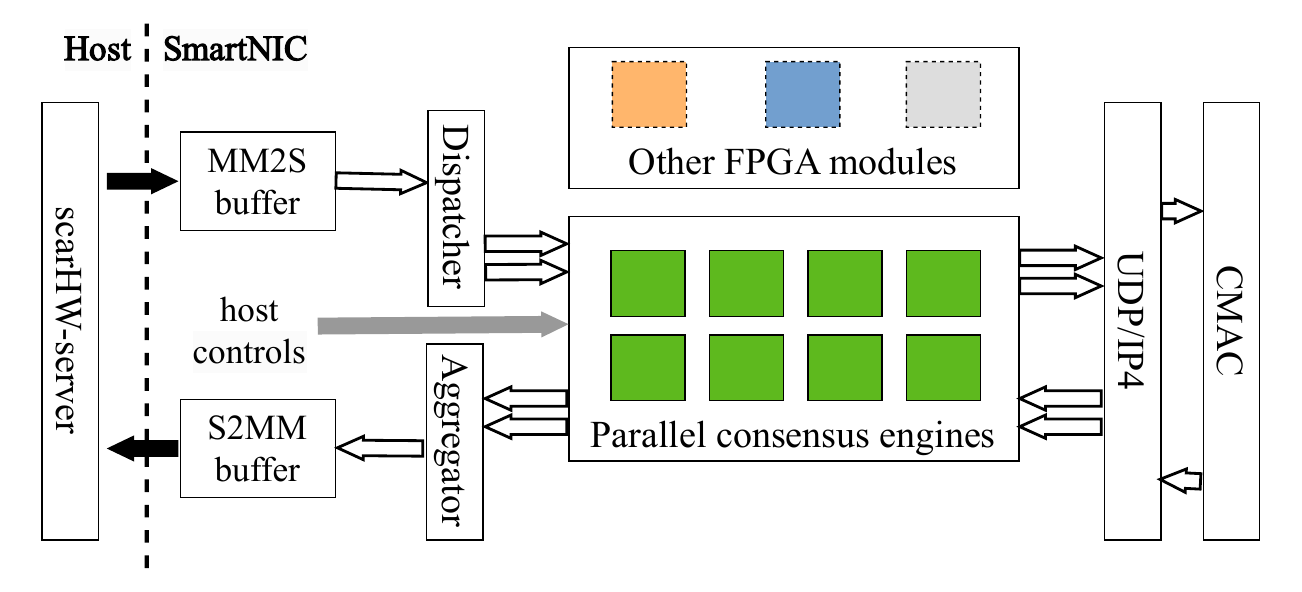}
        \caption{\ac{name} pipelined}
        \label{fig:architecture_diagram_pipelined}
    \end{subfigure}

    \caption{Architecture of \ac{name} variants. Hardware modules use \ac{axi} interconnects.}
    \label{fig:architecture_diagram}
\end{figure}

\autoref{fig:architecture_diagram} outlines the 
architecture of \ac{name} in two variants: \emph{vanilla} (\ref{fig:architecture_diagram_vanilla}), containing 
one replication unit (green box), and \emph{pipelined} (\ref{fig:architecture_diagram_pipelined}), aggregating multiple 
units.
Every \ac{snic} 
running \ac{name} is a replica executing the same logic. 
The brain of the system is the consensus engine module which implements the \ac{ckcuc} algorithm (cf. \autoref{alg:ckcuc}).  In short, the module takes arbitrary values from one or more applications as inputs, proposes them with a unique \textsc{id} 
for the current consensus instance, and outputs consensus decisions. 
The consensus engine operates in synchronous rounds in which all replicas disseminate messages to each other. In a failure-free execution, all replicas \emph{propose values simultaneously} and, after two rounds, output an ordered array of all proposed values as the consensus decision. This key feature of \ac{name} enables load distribution, which we will discuss in later sections. The engine keeps track of packets it receives in every round and flags a remote replica as faulty if its message is not received within the expected timeframe, i.e., it observes an omission failure. To avoid false positives, replicas can send empty \emph{heartbeat} values in absence of pending proposals. Otherwise, replicas can choose to not send a heartbeat on purpose to scale up/down, or allowing re-configuration during ``quiet'' periods with few requests.
In the presence of up to $t$ failures, a decision can take between 2 and $t$ rounds, and replicas can decide in different rounds. When a replica observes more than $t$ failures, e.g., due to 
a network outage, the consensus engine self-declares itself faulty and quits execution.

To avoid costly retransmissions, which would compromise our algorithm's performance, \ac{name} replicas communicate via UDP over Ethernet through the UDP/IP4 and MAC modules.  
The communication between the host and the consensus engine involves two memory-mapped buffers separated for input proposals and consensus decisions using \ac{mm2s} interfaces which bridge the stream nature of the UDP/IPv4 layer and the memory-oriented structure of host-\ac{fpga} transfers. 
Host processes interact with \ac{name} via \texttt{scarHW-server}: a thin software layer that can handle client requests, seamlessly integrate with applications, and configure hardware settings.
The packet broadcaster turns the payload supplied by the consensus engine module into a suitable stream for the UDP/IP4 layer, while the parser performs the opposite function. The bootstrapper module is of particular relevance since it allows the synchronization of the replicas before the start of the first consensus instance by adequately triggering the round timer after an initial message exchange.
After this bootstrap sequence, the consensus engine is normally set to run \emph{back-to-back consensus instances in sequence} in order to achieve \ac{smr} (cf. \autoref{sec:algo_comparison}) while maintaining synchronization. The consensus engine can adjust the round timer dynamically to correct timer drift, in a similar fashion as clock synchronization methods~\cite{ptp_ieee1588}.

The replication throughput of \ac{name} vanilla is limited by the round time, i.e., worst-case latency of the setup plus a safety margin, since replicas have to wait for the decision from a consensus instance before starting the next one. \ac{name} pipelined uses multiple consensus engines, timers and packet handling modules in parallel (grouped into a green module in \autoref{fig:architecture_diagram_vanilla}) to maximize decision rate. Individual green modules in different \acp{name} have matching \textsc{id}s and form a replication group to avoid overlapping protocol states. Dispatcher and aggregator modules provide basic functionalities to respectively parallelize and serialize consensus decisions. Serialization sorts parallel decisions by green module \textsc{id}s to ensure deterministic order across different replicas.
Dispatcher and aggregator settings allow a single application to use different replication groups to increase throughput and also multiple applications to use different replication groups as unrelated \ac{smr} instances.

\subsection{Deployment modes}

\begin{figure}
    \centering
    \includegraphics[width=\linewidth]{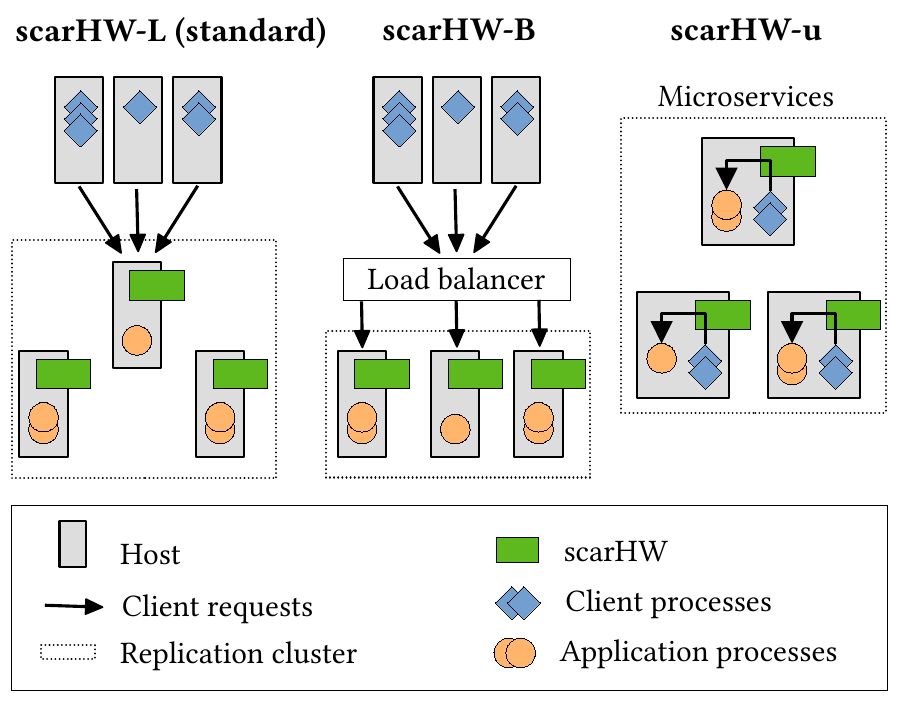}
    \Description{Diagram showing possible deployments for replication clusters using the system.}
    \caption{Possible deployments for replication clusters using \ac{name}. Classical leader-based consensus solutions allow only one clients-service interaction (left) while \ac{name} enables two additional configurations (center and right).}
    \label{fig:deployment_modes}
\end{figure}

Consistently-replicated services running on top of \ac{name} can leverage the power of \emph{load distribution} since all replicas can propose values \emph{simultaneously}. This allows programmers to choose among the different deployment options outlined in \autoref{fig:deployment_modes}. 
In \ac{conic-l} mode, all clients forward requests to one node which in turn forwards requests to the replication cluster. This option is the least beneficial and is included only to compare \ac{name} against other leader-based approaches. 
\ac{conic-lb} mode mitigates the bottleneck by using a load balancer to dispatch the requests to all remote replicas, allowing for custom load shaping strategies. Lastly, \ac{conic-us} 
is tailored to microservices architectures  where client processes are co-located with application server processes. Here, clients interact with respective local instances of \texttt{scarHW-server} to safely replicate requests before sending them to application processes. The separation between \ac{name} processes, assuming mostly synchronous behavior thanks to deployment on programmable hardware, and (asynchronous) application processes naturally fits modern \emph{sidecar}~\cite{sidecars,sidecarsurvey} microservice architectures.
\ac{conic-us} also allows seamless integration with popular container orchestration services such as Kubernetes~\cite{kubernetes} and docker-swarm~\cite{docker-swarm}.

%% file: sections/algorithm.tex
\section{\Ac{ckc}} \label{sec:algorithm}
\acused{ckc}
\acused{ckcuc}

\subsection{Collaborative consensus}
Our novel \ac{ckc} algorithm solves (uniform) \acl{cc} -- or \acs{cc} for short --  a generalized variant of the classical (uniform) ``selective'' consensus with the following properties:

\begin{description}[font=\textnormal]
    \item[\prop{Validity}:] If a process decides set $V$, then every element $v\in V$ was \prim{propose}d by some process.
    \item[\prop{Termination}:] Every correct process eventually decides some set of values $V$.
    \item[\prop{Uniform agreement}:] No two processes $p_i, $ $p_j$ decide different sets of values $V_i \neq V_j$.
\end{description}

Thus, \ac{cc} can decide several proposed values simultaneously. 
We assume that proposed values are distinct, e.g., messages with unique identifiers.
We use \prim{crash} to denote when a process recognizes that it is faulty (delayed) and quits execution and \prim{decide} to denote a decision after which the process quits the consensus instance.
We further assume that all processes propose within a fixed time window $\Delta_{W}$.

\subsection{\Ac{ckc} algorithm}

The core logic of \ac{ckc} was inspired by the consensus algorithm for omission failures of Parv{\'e}dy \& Raynal~\cite{omission_sync_uniform_consensus:SPAA04} (similarly to Paxos for countless  asynchronous variants~e.g., \cite{Dang2015NetPaxos,SpecPaxos_PortsNSDI2015,Li2016NOPaxos,wpaxos_ailijiang_murat_2020,Derecho_Jha-Birman2019,waverunner}), with several key modifications to fit practical applications and enhance performance. 

\paragraph{Overview.}
We first give an intuitive overview of \ac{ckc} (\autoref{alg:ckcuc}, example executions in \autoref{fig:ckc_execution}) behavior and correctness. Please refer to
\iftoggle{withAppendix}{
        \autoref{apx:algorithms}
}{
        supplementary material
}
for a detailed specification.

\begin{algorithm}[t]
\caption{\ac{ckcuc}. Executed by $p_i \in \Pi$.}
\label{alg:ckcuc}
\SetCommentSty{commentfont}
\SetKw{Elif}{elif}{}
\SetKwBlock{ElifBlock}{elif}{}
\SetKw{Then}{then}{}
\BlankLine
\lnl{line:ckc_init}$V_i \gets \{proposal\}$ (can be a heartbeat)\; 
\nl$suspect_i, ~locked_i \gets \emptyset$\;
\nl\Upon(start of round $r_i$){
    \lnl{line:ckc_send}\algprim{send}{(r_i, ~V_i, ~locked_i)} \algoboxAE{to every $p_j \notin suspect_i$}\;
}
\nl\Upon(end of round $r_i$) {
    \algoboxAE{
    \begin{minipage}{0.75\linewidth}
    \lnl{line:ckc_suspectUpdate}$suspect_i \gets$ processes which have not sent $p_i$ a message in $r_i-1$\;
    \end{minipage}
    }
    \algoboxDQ{
    \begin{minipage}{0.75\linewidth}
        \lnl{line:ckc_crash}\lIf{$|suspect_i| > t$} { \prim{crash}}                
    \end{minipage}
    }
    \algoboxLK{
        \begin{minipage}{0.75\linewidth}
        \lnl{line:ckc_selfLockedCheck}\If{$p_i \notin locked_i$}{
            \lnl{line:ckc_lockedCheck}\If{$r_i > |suspect_i|$ \textnormal{\bf or} $locked_i \ne \emptyset$}{
                \lnl{line:ckc_lockedSelfAdd}$locked_i \gets locked_i\cup \{p_i\}$\;
            }  
        }
        \end{minipage}%
    }

    \algoboxDQ{%
        \begin{minipage}{0.75\linewidth}
        \lnl{line:ckc_decideLockedMajority}\Elif $|locked_i| > t$ \Then \algprim{decide}{(V_i)}\;
        \lnl{line:ckc_decideBound}\lIf{$r_i > t$}{ \algprim{decide}{(V_i)}}
        \end{minipage}%
    }
}

\lnl{line:ckc_recv}\Upon(\algprim{recv}{(r_j, ~V_j, ~locked_j)}: 
) {
    \lnl{line:ckc_roundMismatch2}\lIf{$r_j = r_i+1$}{
        end $r_i$, start $r_j$, continue below}
    \lnl{line:ckc_syncEnforce}\If{\algoboxAE{$r_j = r_i$ \textnormal{\bf and}  $p_j \notin suspect_i$}}{
        \lnl{line:ckc_vrecv}$V_i \gets V_i \cup V_j$\;
        \algoboxLK{
        \lnl{line:ckc_lockedUpdate}$locked_i \gets locked_i \cup locked_j$\;
        }       
    }
}
\end{algorithm}

\begin{figure*}
    \centering
    \begin{subfigure}{0.25\linewidth}
        \centering
        \includegraphics[width=\linewidth]{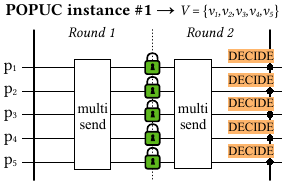}
        \caption{Failure free}
        \label{fig:ckc_execution1}
    \end{subfigure}\hfill
    \begin{subfigure}{0.375\linewidth}
        \centering
        \includegraphics[width=\linewidth]{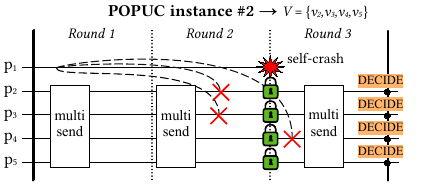}
        \caption{Late message omitted through active exclusion}
        \label{fig:ckc_execution2}
    \end{subfigure}\hfill
    \begin{subfigure}{0.375\linewidth}
        \centering
        \includegraphics[width=\linewidth]{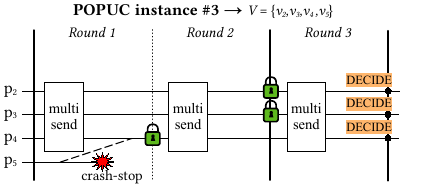}
        \caption{Crash-stop failure}
        \label{fig:ckc_execution3}
    \end{subfigure}
    \Description{Sample execution of the CKC protocol.}
    \caption{Sample \ac{ckcuc} instances (executed back-to-back in \ac{smr}). 
    Green locks represent processes in the ``locked'' state.
    }
    \label{fig:ckc_execution}
\end{figure*}

\Ac{ckcuc} uses a classical round-based approach and optimal early stopping, i.e., it is guaranteed to terminate in $\min(f+2,t+1)$ rounds\footnote{
    $min(f+2, t+1)$ is a lower bound on the number of rounds for uniform consensus in synchronous systems with omission failures~\cite{aguilera_sync_consensus_proof, sync_consensus_hybrid_faults_biely}.
}.
In every round a process $p_i$ multi-sends its set $V_i$ of proposed values it has seen so far. At the start, $V_i$ contains only $p_i$ proposal (\autoref{line:ckc_init}). When $p_i$ receives other processes' proposal sets (during the same or later rounds), it adds them to $V_i$ (\autoref{line:ckc_vrecv}). In case $p_i$ receives a message from a process in the next round, it immediately starts the next one (\autoref{line:ckc_roundMismatch2}) so that the receive routine always follow the send routine of the same round enforcing (not assuming) round alignment. The protocol uses the following mechanisms to determine whether it is safe to decide on $V_i$:

\begin{description}[font=\textnormal,leftmargin=\parindent]
    \item[\algoboxAE{Active exclusion:}] $suspect_i$ tracks possibly failed senders; these are permanently excluded from send (\autoref{line:ckc_send}) and receive (\autoref{line:ckc_syncEnforce}) operations. Late messages are also filtered out, resulting in  omission failures (cf. \autoref{fig:ckc_execution2}).
    \item[\algoboxLK{Locking:}] we say that $p_i$ is ``locked'' when it heard all the values that can be known at $r$ (and consequently no more can be learned in the future~\cite{omission_sync_uniform_consensus:SPAA04,garg2002elements}). This occurs when $p_i$ has completed more rounds than the number of processes which $p_i$ suspects are faulty (\autoref{line:ckc_lockedCheck}). Intuitively, at that point some round must have passed in which $p_i$ detected no new failure, so it has heard everything its peers knew, and by active exclusion it cannot hear from any more processes. $p_i$ can also become locked if it receives a message (with $V_j$) from another process $p_j$ that is already locked (\autoref{line:ckc_lockedUpdate} then \autoref{line:ckc_lockedCheck}). 
    \item[\algoboxDQ{Decide or quit:}] if $p_i$ suspects more than $t$ processes, it learns that it does not belong to a correct quorum and quits execution (\autoref{line:ckc_crash}).
    When $p_i$ knows of a majority ($t+1$ out of $N = 2t+1$) of locked processes, or it reaches its last round, it decides (\autoref{line:ckc_decideLockedMajority}). Requiring a genuine majority is what makes the agreement uniform: any two processes deciding this way have seen locked sets that overlap, so neither can decide on knowledge the other lacks. Processes also decide if they reach round $t+1$ (\autoref{line:ckc_decideBound}); by then, they know that at least one failure-free round has passed in which all processes fully exchanged their $V$ (i.e., the classical ``pidgeonhole'' principle~\cite{LowerboundConsistency_FischerLynchIPL1982, aguilera_sync_consensus_proof}).
    Most of the time no failures occur and the decision takes two rounds (cf. \autoref{fig:ckc_execution1}): round 1 for locking, round 2 for locked majority.
    
\end{description}

\paragraph{Important differences to Parv\'{e}dy~\&~Raynal~\cite{omission_sync_uniform_consensus:SPAA04}.}

To fit practical applications, \ac{ckcuc} differs in three key aspects.

\begin{enumerate}[leftmargin=\parindent,itemindent=1mm]

\item\textbf{Deciding on sets.}
\ac{ckcuc} 
decides a set of values $V$ rather than a single value -- solving \ac{cc} as opposed to classical ``selective'' consensus. This simple optimization is highly effective as it substantially increases 
throughput and mitigates the cost of all-to-all message relays. 
\item\textbf{Handling round mismatch and late messages.}
\ac{ckcuc} explicitly considers initial synchronization window and bounded clock drift, rather than a perfectly synchronized round counter~\cite{omission_sync_uniform_consensus:SPAA04} which makes algorithms more compact but is unimplementable in practical systems.
As a result, a process may lag behind, e.g., receive a message before even starting.
\ac{ckcuc} processes explicitly handle messages from processes one round ahead of them by immediately triggering the start of the next round in advance (\autoref{line:ckc_roundMismatch2}). Processes also explicitly discard late messages from slow peers (\autoref{line:ckc_syncEnforce}) to avoid polluting their $V$ and $locked$ sets, which would otherwise lead to inconsistent decisions (\autoref{fig:ckc_execution2}).
\item\textbf{Full value set relay to simplify verification.}
Processes in \ac{ckc} always multi-send the full value set $V_i$ instead of just the values learned in the previous round~\cite{omission_sync_uniform_consensus:SPAA04}.
This does not improve efficiency, but it makes the algorithm easier to understand and considerably cheaper to model check, which is what allowed us to verify \ac{ckcuc} with the \apalache{} model checker~\cite{OtoniFKKMOPTK:2026}.
Full-relay keeps the reachable state graph smaller than $new$-relay does, mainly by removing the per-process $new$ variable and the round-boundary action resetting it, as a message payload is simply a projection of the sender's current state.
Furthermore, \ac{ckc} increases the bandwidth consumption \textit{only in the occurrence of failures}, i.e., with more than 2 rounds, as both versions have processes multi-send $V$ in the second round. As failures are extremely rare, \ac{ckc} benefits from verified correctness at little practical cost. 

\end{enumerate}

\begin{table*}[t!]
{\small
    
    \setlength{\tabcolsep}{1.8mm}
    \caption{\acs{ckcuc} vs leader-based, multileader and leaderless asynchronous algorithms. All tolerate $t$ failures (cf.  \autoref{tab:comp}). $S$ is the number of shards. *Importantly: \ac{name}'s maximum latency is much lower than other's average latency, i.e., $\Delta_{max} \ll$ $\Delta_{avg}$. }
    \label{tab:algos}
    
 \begin{tabular}{l|ccccc}
        \toprule 

\bf  
System & \bf \makecell{\# decisions (latency) \\ \textit{without batching}} &  \bf \makecell{\# messages \\ per decision}  & \bf \makecell{Load balance} & \bf \makecell{Downtime on} & \bf \makecell{Primary bottleneck} \\
\midrule
\bf 
VR~\cite{ViewstampedReplication_OkiPODC88}, Raft~\cite{raft} 
&  1 ($2\Delta_{avg}$) & $2N$ & no & Leader crash & Leader (\ref{itm:bottleneck}) \\
\bf NOPaxos~\cite{Li2016NOPaxos} & 1 ($2\Delta_{avg}$ from client) & $2N$ & no & Leader/switch crash & Replica gets all client requests \\
\bf WPaxos~\cite{wpaxos_ailijiang_murat_2020} & 1 (2 to 4$\Delta_{avg}$) & $2N$ to $4N$ &  $S$ leaders & Leader crash & Cross-shard requests\\
\bf Derecho~\cite{Derecho_Jha-Birman2019} & 1 ($2\Delta_{avg}$) & $N + N^2$ & $N$ replicas & Replica crash & Slow replica slows others (\ref{itm:downtime})\\
\bf POPUC & $N$ $(2\Delta_{max}*)$ & $2N$ & $N$ replicas & None & Relies on $\Delta_{max}$*  \\
        
    \bottomrule
    \end{tabular}
}
\end{table*}

\paragraph{Liveness, safety and automatic model checking.}
\ac{ckc} ensures \prop{Termination} if a quorum of correct processes exists (cf. \autoref{sec:model}). Intuitively, in the worst-case $f=t$ with a different process failing at every round, processes in $Q$ still decide in $t+1$ rounds (\autoref{line:ckc_decideBound}). 
\ac{ckc} always ensures \prop{Uniform agreement} through active exclusion and locked majorities. In short, once a process $p_i$ is locked, it will never receive another new proposal since late messages from faulty processes are discarded. However, it still cannot decide since it does not know whether other correct processes will adopt the same decision. This condition, i.e., uniformity, is met once $p_i$ learns of a majority of $t+1$ locked processes in \autoref{line:ckc_decideLockedMajority}.
For pen-and-paper proofs of the algorithm's properties, we refer the reader to 
\iftoggle{withAppendix}{
\autoref{apx:algorithms}.
}{
supplementary material.
} 
To enhance correctness guarantees including for optimizations, we specified \ac{ckcuc} in TLA$^+$~\cite{Lamport:2002} and checked its properties using the \apalache{} model checker~\cite{OtoniFKKMOPTK:2026}. All properties of \ac{ckcuc} were shown to hold.
The full formal specification for independent verification are readily available upon request.\dvrnote{add them to appendix in submission} 

\subsection{Consistent replication with \ac{ckcuc}}

\paragraph{Execution.}
\ac{ckc} instances, like the ones in \autoref{fig:ckc_execution}, are executed back-to-back to achieve \ac{smr}. At a high-level, every process receives requests from clients, buffers them and proposes them at the beginning of the first incoming \ac{ckc} instance. At the end an instance, correct processes deliver values in the decision set $V$ in a deterministic order, e.g., by proposer \textsc{id}. 
In instance \#1 (\autoref{fig:ckc_execution1}), all processes successfully multi-send their proposals in round 1, becoming locked. All processes learn about a majority of locked processes in round 2, hence they all decide the entire proposal set. 
Instance \#2 (\autoref{fig:ckc_execution2}) presents a scenario where $p_1$'s messages are delayed, e.g., due to a rare network or computational delay. Processes $p_{2...5}$ exclude $p_1$ from their sender list at the beginning of round 2, leading to $p_1$ quitting execution when it realizes a majority of processes has excluded it.
In instance \#3 (\autoref{fig:ckc_execution3}), $p_5$ sends a message to $p_4$ then crashes. $p_4$ becomes locked at the end of round 2 but not $p_{2}, p_3$ which witnessed the failure. The decision is shifted to round 3 where all $p_{2..4}$ hear from a majority of locked processes and all safely \prim{decide} on the entire value set $V_{2..5}$. Importantly, \ac{name} can be set so that a process that self-quits (e.g., $p_1$ in \autoref{fig:ckc_execution2}) is not (immediately) excluded from future instances, allowing it asynchronously to ask for missing values (i.e., state transfer\dvrnote{@Patrick - should we cite something relevant?}) and start reproposing once it has caught up. 

\paragraph{Heartbeats and dynamic reconfiguration.}
Processes running \ac{smr} might send empty \emph{heartbeat} values in absence of pending proposals to avoid being excluded from the set of correct processes (cf. \autoref{sec:architecture}). Processes can choose to not send a heartbeat in order to block on purpose to allow re-configuration when the load on the system is low, e.g., scale up/down, fine-tune the round time to avoid excessive heartbeat traffic.
Custom dynamic heartbeat settings allow \ac{name} to operate in intermittent ``on and off'' mode, naturally enabling the combination with flow control and/or traffic engineering.
\iftoggle{withAppendix}{
\autoref{apx:algorithms}
}{
The supplementary material
} 
provides a detailed specification of the \ac{tob} primitive on top of \ac{ckcuc} used to achieve \ac{smr}.

    
    

        


\section{Traditional solutions vs \ac{name}}\label{sec:comparison}

\subsection{POPUC-based \ac{smr} vs asynchronous algorithms}\label{sec:algo_comparison}

\autoref{tab:algos} shows characteristics of \ac{ckcuc} when used for \ac{smr} comparing it to \ac{vr}~\cite{ViewstampedReplication_OkiPODC88} (a widely adopted Paxos-like~\cite{paxos} \ac{smr} protocol), Raft~\cite{raft} and fast state-of-the-art multi-leader/leaderless algorithms. 

\paragraph{Leveraging costless synchrony.}
\ac{ckc} shows the same \emph{best-case} message delays, complexity and fault-tolerance as Paxos-like solutions, but unlike them, provides great scalability through load-balancing and does not compromise downtime. \ac{ckc}  
to that end relies on maximum latency ($\Delta_{max} \approx$ round time) as opposed to average latency ($\Delta_{avg}$), which is however 
not an issue here: 
\ac{name}'s stability ensures that the difference between \ac{name}'s $\Delta_{avg}$ and $\Delta_{max}$ is very small (<0.5$\mu$s) and, importantly, \ac{name}'s $\Delta_{max}$ smaller than $\Delta_{avg}$ of software-based systems (2.2$\mu$s vs 3--10 $\mu$s, cf. \autoref{fig:stable_latency}). This nullifies a commonly perceived implicit cost of synchrony while preserving its benefits.
In NOPaxos, client send requests directly to replicas. \ac{conic-us} (\autoref{fig:deployment_modes}) achieves a similar effect by deployment, as we will show in the evaluation.  

\paragraph{High throughput without batching.}
\ac{ckc} and other leaderless algorithms in \autoref{tab:algos} are designed to overcome the leader bottleneck (\ref{itm:bottleneck}) of classical solutions (\ac{vr}, Raft) in order to achieve high replication throughput. 
Popular network-supported solutions NOPaxos~\cite{Li2016NOPaxos} and  
works~\cite{SpecPaxos_PortsNSDI2015, Hydra_ChoiNSDI23} use in-network serialization to avoid the usual replica ACK message exchange (i.e., phase 2B) to greatly reduce the load on the leader which is left to process client requests only. The ACK phase is still present in \ac{ckc} but is offloaded to fast \ac{snic} hardware in \ac{name} consensus engine, keeping it outside the critical path. Crucially, unlike NOPaxos,   
\ac{ckc},
WPaxos~\cite{wpaxos_ailijiang_murat_2020} and Derecho~\cite{Derecho_Jha-Birman2019} allow load distribution of client requests across a number of replicas, which we show to be a key enabler of high-throughput in the evaluation. 
Like all considered approaches, \ac{ckcuc} \emph{can also batch} requests into a single instance to further increase throughput. Importantly, the higher number of decisions reported in \autoref{tab:algos} is due to \ac{cc} and not batching, i.e., processes proposing simultaneously without impacting latency.


\paragraph{Downtime and up-to-date replicas.}
Paxos-like solutions (\ac{vr}, Raft, NOPaxos and WPaxos) require leader election upon the crash of a leader, leading to downtime. Derecho, which separates broadcast from view change, is more fragile as the failure or slowdown of any process slows or blocks other replicas (cf. \autoref{tab:algos}). 
Thanks to its different approach, \ac{ckcuc} achieves negligible downtime up to $t$ (crash or omission) failures. This is because \ac{ckcuc} tackles failures with only short, bounded extra rounds (of a few $\mu$s in \ac{name}) in which processes converge to a decision and, unlike asynchronous solutions, does not require external election/view change routines based on large unsafe timeouts which can still lead to false positives.
Furthermore, 
all \ac{ckc} replicas have an up-to-date state which clients can read, unlike leader-based solutions in which non-leader processes might have a stale state and require quorum reads leading to performance degradation (cf. \autoref{sec:eval}). 
Additionally, recovery mechanisms can exploit ``self faults'' as explicit notification of a communication issue, e.g., a message drop caused by port flapping.

\subsection{Practical advantages of \ac{name} in the absence of failures}
\label{sec:perf_advantages}
 
\autoref{fig:perf_advantages} shows a breakdown of \ref{itm:bottleneck} which \ac{name} addresses via a combination of its two performance-enhancing features: hardware acceleration and \ac{ckc}-enabled load distribution.

\begin{figure}
    \centering
    \includegraphics[width=0.99\linewidth]{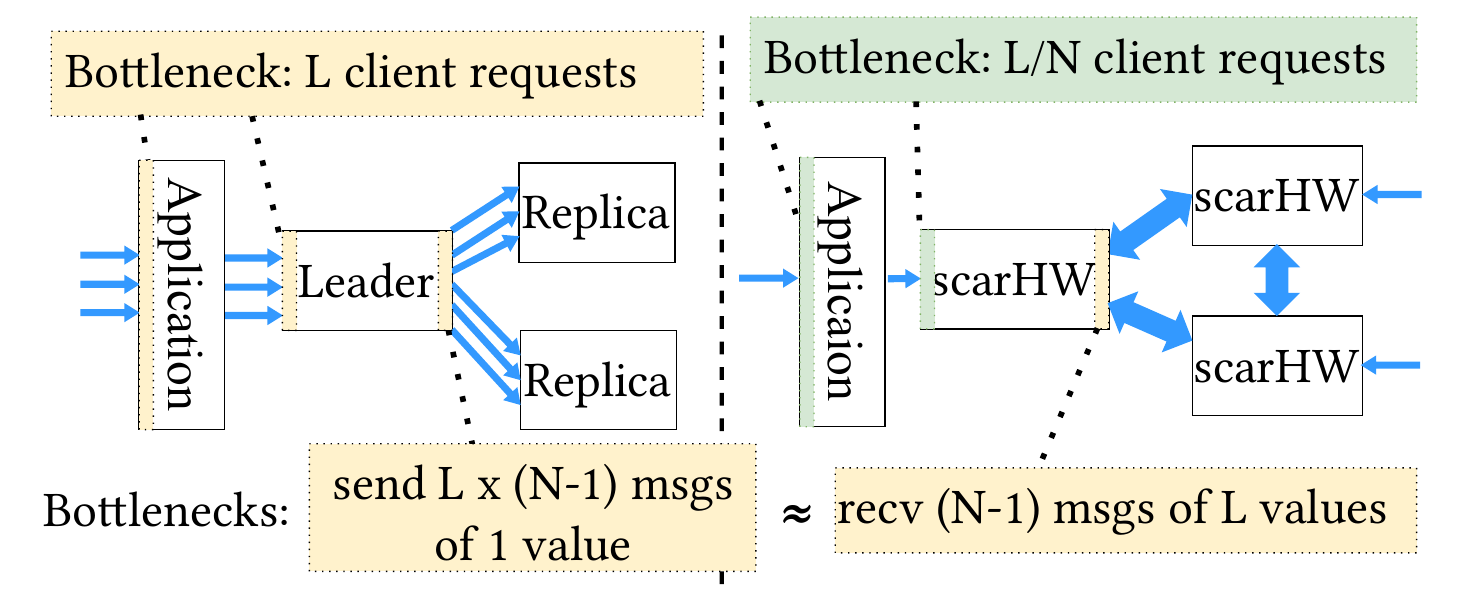}
    \caption{Bottlenecks in traditional leader-based solutions (right) vs \ac{name} (left)}
    \label{fig:perf_advantages}
\end{figure}

\paragraph{Accelerating the protocol core with hardware.}
Given a request load of $L$ client requests and a cluster of $N$ nodes, every replica in \ac{name} receives and processes $(N-1)$ \ac{ckc}'s round 2 messages, each of which contains $L$ values simultaneously proposed.
\ac{name} eliminates this major bottleneck by offloading the protocol core to \ac{fpga}-\ac{snic}, greatly outperforming state-of-the-art software-based replication solutions (cf. \autoref{sec:eval_peakperf}).

In general, under a purely algorithmic perspective, the amount of work is the same for one \ac{ckc} replica as for the leader in a leader-based \ac{smr}, i.e., they both process $L\times(N-1)$ values (cf. \autoref{fig:perf_advantages}). Practical advantages of \ac{name} come down to whether it can fit $L$ values in a single round 2 packet (remember that \ac{name}-pipeline can handle multiple \ac{ckc} instances at the same time, c.f. \autoref{fig:architecture_diagram_pipelined}), hence reducing the wasted bandwidth, e.g., packet headers. We show that this is the case for small to medium value sizes in \autoref{sec:eval_peakperf}. As outlined earlier, this process is not equivalent to batching as \ac{name} instances do not wait for $L$ values, but rather propose them as they arrive.

\paragraph{Load distribution improves application performance}
Alas, hardware-accelerated replication does not necessarily translate into performance improvements for end applications running in the endhost \ac{os}. These commonly struggle to keep up with very high throughputs, unless they are configured to handle line-rate requests, which is hard to achieve and requires a large number of resources hence likely disruptive to other applications.
This is where the advantage of \ac{ckc}-enabled load distribution comes into play: as the bottleneck shifts to the application layer (cf. \autoref{fig:perf_advantages}), each application associated with a \ac{name} replica processes only $L/N$ client incoming requests, leave the heavy lifting to \ac{name} and executes the compact decision set.
Typical leader-based replication systems, whether hardware~\cite{waverunner,Li2016NOPaxos} or software~\cite{mu,p4ce_dulong_icdcs24}, suffer from one application having to process all the client requests (\ref{itm:bottleneck}), resulting in limited effectiveness in practice, as we show in \autoref{sec:eval_apps}.

The combination of hardware acceleration and load distribution allows \ac{name} to achieve high throughput and low latency, turning the cluster size $N$ into a performance enhancer rather than a bottleneck.

%% file: sections/implementation.tex
\section{Implementation}\label{sec:implementation}

\subsection{\ac{fpga} image}
We implemented \ac{name} vanilla on an AMD/Xilinx Alveo U50 \cite{AMD_Alveo_U50} \ac{fpga} \ac{snic}. We used Vivado and the v++ compiler from the Vitis 2023.2 suite to compile and package all binary \ac{fpga} images. 
For the MAC and UDP/IP4 layer, we
used the following open source modules: UltraScale+ Devices Integrated 100G Ethernet Subsystem by AMD/Xilinx \cite{ultrascale_mac} and the project 100G-fpga-network-stack-core \cite{100Ggithub, sutter2018fpga, ruiz2019tcp}.

All other hardware modules are custom-built and use a modular design consisting of a kernel providing raw functionality and a wrapper exposing the required ports through \ac{axi}-lite interfaces to facilitate simulation and integration into the whole \ac{fpga} image. 
We employed four additional hardware timestamping modules for clock-accurate latency measurements: two between host-FPGA buffers and the consensus engine and two between the UDP/IP4 and MAC layer in both directions. Timestamps are inserted 
directly into the UDP payload for post-processing. 
All modules interact via \ac{axi} interfaces, specifically \ac{axi}4, \ac{axi}4-lite and \ac{axi}-stream. During development, we extensively evaluated latency and throughput of \ac{axi} interfaces and host-\ac{fpga} interactions, observing 
that \ac{mm2s} buffers of over 1MiB can easily saturate a 100Gbps link.
Similar iterative testing showed that kernel processing time is very stable and negligible compared to network latency. Overall, custom hardware modules amount to 1577 lines of SystemVerilog. 
Refer to anonymous \cite{anonymized_ref} for a detailed overview of the hardware implementation.

\subsection{Optimizations and software layer}
We implement a dynamic round timer adjustment mechanism in-hardware to keep replicas synchronized in the presence of clock drift or of a large initial synchonization window. In short, in the occurrence of a round mismatch (cf. \autoref{alg:ckcuc} \autoref{line:ckc_roundMismatch2}), the consensus engine adjusts the round timer of \autoref{fig:architecture_diagram_vanilla} by not only anticipating the next round but also slighty reducing the round duration. 
In our setup, \ac{name} cluster size ($N$) is limited by Ethernet's \ac{mtu} as follows: $N = (MTU - header - 17B)/(8B + 2B) \approx 140$, which is over 20$\times$ the size of common replication clusters~\cite{zookeeper, ukharon, mu, waverunner}.
\texttt{scarHW-server} is a lightweight multi-threaded TCP server implemented in 466 lines of C++ exposing an API for \ac{name} settings and (consistent) GET/SET requests.

The number of parallel consensus (green) modules for \ac{name} pipelined is limited to 31 by the available \ac{fpga} area on the Alveo U50.
\texttt{scarHW-server} interacts with \ac{name} via the AMD/Xilinx XRT library.
\ac{name} source code is available online\footnote{All source code will be made available upon publication.}.

%% file: sections/evaluation.tex
\section{Evaluation}\label{sec:eval}

We evaluate \acs{name} by comparison with state-of-the-art services and applications, addressing three research questions: 

\begin{enumerate}[ref={\bf RQ\arabic*},label={{\bf RQ\arabic*}:},leftmargin=\parindent,itemindent=6.5mm]

    \item What is \ac{name} peak performance (\ref{itm:bottleneck}, \ref{itm:ft})?\label{rq:peak}
    \item How does \ac{name} perform when combined with real-world \ac{smr} applications
    (\ref{itm:bottleneck}, \ref{itm:ft})?\label{rq:apps}
    \item  What is the impact of failures on the performance of \ac{name}-based services (\ref{itm:ft}, \ref{itm:downtime})?\label{rq:ft}

\end{enumerate}


\subsection{Methodology}\label{sec:eval_setup}

\paragraph{Evaluation cluster.}
We use a cluster 
with 3$\times$ Supermicro SYS-120U-TNR servers with Dual Xeon Gold 5315Y processors and 190GiB DDR4 RAM in line with state-of-the-art works compared against. Each server is equipped with one AMD Xilinx Alveo U50~\cite{AMD_Alveo_U50} running \ac{name} (serving up to 59 applications) and one NVIDIA Mellanox ConnectX-7 used by all other compared approaches. Both \acp{nic} are connected to an Edgecore Wedge100BF-32X-O-AC-F Tofino switch via a 100G DAC QSFP28 cable.
All endhosts run Ubuntu Server 22.04 with Linux 5.15.0-125 and use \ac{xrt} version 2.16.204, \texttt{xocl} and \texttt{xclmgmt} from Vitis 2023.2.

\paragraph{Comparison.}
We compare \ac{name} against the following state-of-the-art replication systems:

\begin{description}[font=\it]
\item[Waverunner\textnormal{~\cite{waverunner},}] the fastest \ac{smr} module to our knowledge, 
using Alveo U280~\cite{alveo280} \acsp{fnic} to accelerate failure-free execution of Raft~\cite{raft}.

\item[Mu\textnormal{~\cite{mu},}] a low-latency Paxos-based algorithm leveraging one-sided \ac{rdma} writes.

\item[P4ce~\textnormal{\cite{p4ce_dulong_icdcs24},}] a protocol similar to Mu implementing \ac{rdma} write multicast on programmable switches to reduce network load on the leader.

\item[NOPaxos\textnormal{~\cite{Li2016NOPaxos},}] a fast \ac{smr} protocol offloading request serialization to a network switch to eliminate \ref{itm:bottleneck}. 

\end{description}

We thoroughly evaluate the integration of \ac{name} into two widely used applications -- Redis~\cite{redis}, a (non-consistent) distributed key-value store, and Zookeeper~\cite{zookeeper}, a consistent distributed configuration service -- by comparing their performance with and without \ac{name}. Additionally, for Redis, we compare \ac{name} against RedisRaft~\cite{redisraft}, an official module developed by RedisLabs that implements Raft-based 
\ac{smr}.

\paragraph{Breakdown.}
The first set of microbenchmarks (\autoref{sec:eval_peakperf}) puts \ac{name} in perspective w.r.t. the state of the art in terms of peak performance in failure-free executions. These isolate the effect of hardware acceleration (cf. \autoref{sec:perf_advantages}) showing the advantages of \ac{name} and Waverunner.
Full-system benchmarks (\autoref{sec:eval_apps}) highlight the benefits of \ac{ckc}-enabled load distribution combined with hardware acceleration when \ac{name} is integrated into real-world applications, i.e., the key performance features of \ac{name} (cf. \autoref{sec:perf_advantages}).
Finally, we evaluate the impact of failures on \ac{name}-based services (\autoref{sec:eval_ft}).

\begin{figure}
    \centering
    \includegraphics[width=\linewidth]{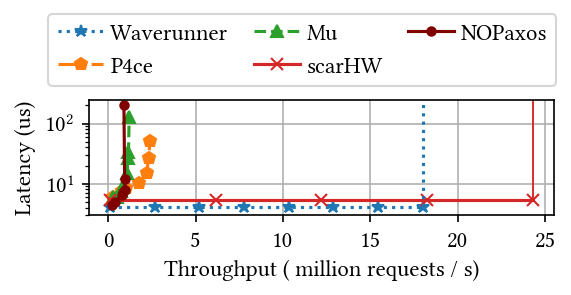}

    \caption{Latency at increasing throughput (64B requests, 5 replicas). Vertical lines show hardware bottlenecks.}
    \label{fig:max_perf}
\end{figure}

\begin{figure}
    \centering
    \includegraphics[width=\linewidth]{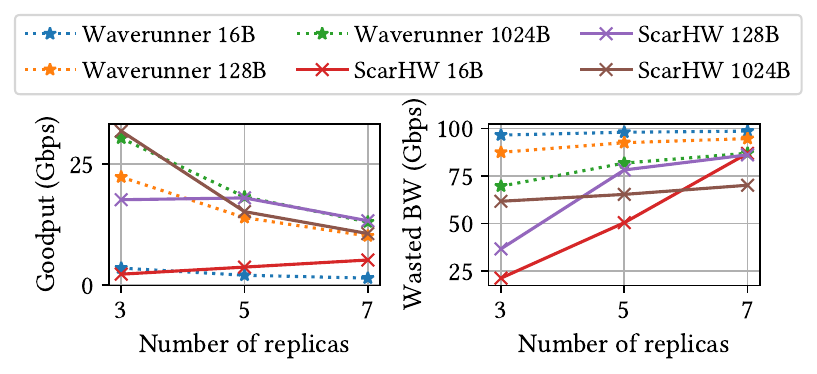}
    
    \caption{Goodput and relative wasted bandwidth (e.g., headers, \textsc{ack}s) of hardware replication systems with increasing cluster and proposal value sizes.
    }
    \label{fig:tput_scale}
\end{figure}

\subsection{Microbenchmarks: peak performance (\ref{rq:peak})}\label{sec:eval_peakperf}

\paragraph{Settings.}

We re-implement the NOPaxos ``normal operation'' protocol~\cite{Li2016NOPaxos} assuming perfectly ordered, lossless packets, hence capturing its best-case performance (and outperforming the full protocol’s latest evaluation~\cite{Hydra_ChoiNSDI23}).
Due to the impossibility of fully re-implementing other compared approaches, all requiring very specialized hardware (namely specific \ac{snic} models, Infiniband, P4-enabled switch), we simulate their performance based on their complexities and using real measurements to infer parameters.
To obtain the best-case consensus decision latency for prior methods, we measure 1 million \ac{fpga}-\ac{fpga} packet \acp{rtt} for Waverunner and single \ac{rdma} write latencies for Mu and P4ce at a low throughput of 1K packets-per-second with payloads from 64B to 1024B. 
We use the minimum latency observed to scale throughput curves taken from~\cite{waverunner} for Waverunner and~\cite{p4ce_dulong_icdcs24} for Mu and P4ce. We use the same benchmark settings as both cited works (with similar server specifications) to approximate real performance as closely as possible. 
We measure latency and throughput of \ac{name} vanilla by running several billion consensus instances with a round period of 2.7$\mu$s. This includes a safety margin of 0.3$\mu$s over the worst-case latency ever observed although not required for correctness or performance. 
Values are proposed by a leader node for all approaches except \ac{name}, in which all replicas propose values simultaneously and NOPaxos in which clients broadcast requests to all replicas. 

\paragraph{Latency vs throughput.}
\autoref{fig:max_perf} shows the replication latency at increasing throughput with a standard cluster size of 5 replicas, using small 64B packets. 
Performance differences across the evaluated approaches are primarily driven by whether they rely on software for handling replication requests or not.
NOPaxos, Mu and P4ce are constrained by the host's software stack, hitting a  maximum throughput wall at around $\sim$1Mpps. 
Among these, P4ce achieves the best performance by leveraging \ac{rdma} and switch-level message aggregation, which reduces load on the leader (\ref{itm:bottleneck}).
In contrast, the \ac{fpga}-offloaded designs Waverunner and \ac{name} operate at line rate on dedicated hardware, maintaining ultra-low, stable latency up to extremely high throughput. Both approaches saturate the 100Gbps network: Waverunner's large number of pipelined proposals clog the output bandwidth of a single \ac{snic}, while \ac{name}'s \ac{ckc} round 2 messages saturate the replicas ingress badwidth. \ac{name} advantage comes from condensing multiple values into a single packet, which reduces the impact of packet headers to the overall throughput (see \autoref{fig:perf_advantages}).
Waverunner shows slightly better latency than \ac{name} (5.4$\mu$s vs 3.41$\mu$s) because the latter is limited by $\Delta_{max}$ (cf. \autoref{tab:algos}) which also includes a safety margin.

\paragraph{Throughput at scale and batching.}
\autoref{fig:tput_scale} shows goodput and wasted bandwidth (throughput - goodput) per replica of \ac{name} and Waverunner with standard \ac{smr} cluster sizes~\cite{mu,p4ce_dulong_icdcs24,waverunner,ukharon} (cf. \ref{itm:ft}) and different proposal value sizes, to showcase the effect of batching on throughput.
While \ac{name}'s pipelined parallel instances for 16B and 128B at 3 nodes maximize the \ac{fpga} area of the Alveo U50, commonly available \ac{fpga} \acp{snic} with larger area, e.g., Alveo U280~\cite{alveo280}, would extend \ac{name}'s improvements at small scale. 
All other points for both approaches saturate a 100Gbps network.
In general, \ac{name} achieves better goodput best on small to medium value sizes and less wasted bandwidth because it can pack all values of a single consensus instance into a single packet, minimizing the impact of packet headers.  
Waverunner shows slightly better goodput (but higher wasted bandwidth) for large 1024B values because \ac{name}-pipelined is limited by the number of parallel instances (cf. \autoref{fig:architecture_diagram_pipelined}) that it can run without exceeding the bandwidth. For instance, with 5 replicase, \ac{name} would need to run 2.5 parallel instance to saturate the network, but it defaults to 2 since it needs to keep the same size for all instances.

\begin{figure*}[t]
    \centering
    \includegraphics[width=\textwidth]{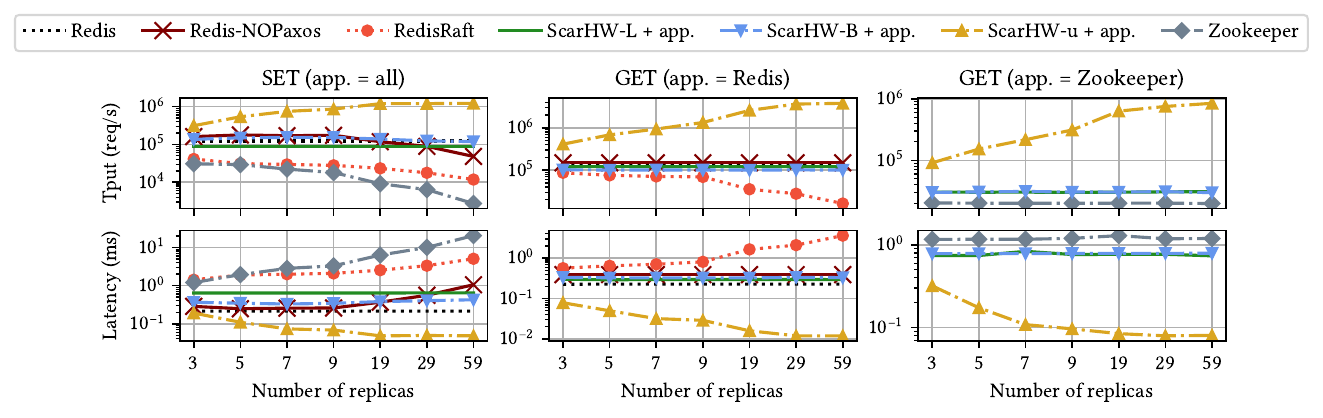}
    
    \caption{Application performance comparison with respect to the number of replicas. Throughput is on the $y$-scale in the top row (higher is better) and latency in the bottom row (lower is better); both use logarithmic $y$-scales.}
    \label{fig:apps}
\end{figure*}

%


\subsection{End-to-end application performance (\ref{rq:apps})}
\label{sec:eval_apps}

Network-speed processing rates of hardware replication systems such as \ac{name}, Waverunner, and similar systems~\cite{Nanoconsensus_RovelliSoCC25, consensus_in_a_box} put huge pressure on endhost (software) applications which can easily become the bottleneck unless they use heavily-parallelized, dedicated packet processing pipelines~\cite{waverunner, Nanoconsensus_RovelliSoCC25}.  
Crucially, \ac{name} also accelerates applications which go at ``software speed'' out-of-the box thanks to \ac{ckc}'s properties (cf. \autoref{sec:perf_advantages}), namely by allowing load distribution.

\paragraph{Settings.}
This set of benchmarks tests the failure-free performance of our \ac{name} vanilla implementation integrated into Redis and Zookeeper to provide consistent replication. 
We evaluate latency and throughput of SET and GET requests from 60 clients, each sending 100k requests for each measurement using 24 keys and random 8B integers as values. SET requests are blocking
and servers also do not use batching.
Cluster sizes larger than 3 co-locate replicas uniformly across the 3 physical nodes. 
Zookeeper SET requests, NOPaxos GET, and RedisRaft SET requests are forwarded directly to the leader as required. For all approaches we evaluate consistent GET requests: we set the quorum read option in RedisRaft~\cite{redisraft_quorum_reads} and issue GET requests to a follower node for Zookeeper, forcing it to synchronize with the leader beforehand. For \ac{name}, GET requests are consistent by default as all nodes are updated synchronously.
We also include results for un-replicated Redis as a baseline. 
\ac{name} is evaluated in all its deployment modes (cf. \mbox{\autoref{sec:design}}). In \ac{conic-lb}, we use Linux IP Virtual Server~\cite{ipvs_linux_virtual_server} as a thin client-side load balancer to forward request to all replicas using a round-robin scheme. 
SET requests for \ac{name}-based Redis and Zookeeper are handled by \texttt{scarHW-server}, which consistently replicates the request through \ac{name}, 
responds to clients, and asynchronously delivers the request to the application. Waverunner~\cite{waverunner} adopts a similar strategy. 
We use the official \texttt{redis-benchmark}~\cite{redisbenchmark} for Redis and RedisRaft.
For Zookeeper, we implement simple C clients using the Zookeeper API. We use the same implementation of NOPaxos described in \autoref{sec:eval_peakperf}, with each replica interfacing with a local Redis instance.

\paragraph{Results.}
\autoref{fig:apps} shows that \ac{name} applications \textit{maintain or even improve performance when scaling} to more replicas, as opposed to the typical performance degradation of \ac{smr} observed with RedisRaft, native Zookeeper, and leaderless NOPaxos at large scale.  
This is due to \ac{name}  benefiting from (1) synchrony, i.e., load distribution in this specific benchmark, without suffering from (2) relying on conservatively high latency bounds thanks to protocol hardware offload (cf. \autoref{sec:algo_comparison}, \autoref{sec:perf_advantages}).
For instance, with 59 replicas, every Redis/Zookeeper application instance with \ac{conic-lb} or \ac{conic-us} has to handle $\sim$1 client request at a time, with \ac{name} handling the core protocol heavy-lifting. \ac{conic-l}, RedisRaft, and Zookeeper each replica or leader has to handle 60 requests in software (\ref{itm:bottleneck}). While NOPaxos significantly reduces the load on the leader by offloading the leader to the network, replicas do not benefit from (1) as they still have to process all client requests.  Avoiding (2) makes the cost of \ac{ckc} negligible w.r.t. the speed at which application threads can receive/respond to client requests, resulting in substantial scaling improvements (\ref{itm:ft}). Overall, \ac{conic-lb} outperforms native Zookeeper and RedisRaft in all metrics and performs slightly better than Redis-NOPaxos in throughput and GET latency despite the load-balancer and protocol overhead. 
\ac{conic-us} embodies the benefits of distribution since clients send requests to respective \emph{co-located} replicas, achieving from 2$\times$ to over $100\times$ better performance than all approaches in all metrics. 
The improvements decrease 
at yet larger scales 
as application servers start idling due to the small number of client requests. 

Due to the \ac{smr} overhead, replication clusters are usually limited to a small number of replicas reaching at most 7 or 9~\cite{zookeeper,mu,electrode,ukharon},
explaining the drop in performance of RedisRaft and Zookeeper. \ac{name} overturns this common 
constraint making replication an appealing tool to distribute load and increase performance, particularly for microservices.


\subsection{Failure impact (\ref{rq:ft})}
\label{sec:eval_ft}

\paragraph{Description.}
This benchmark evaluates the impact of failures on \ac{conic-lb} and \ac{conic-us} integrated into Zookeeper by injecting process failures in a 
5-replica cluster. We omit direct evaluation of omission failures 
since \ac{ckcuc} treats crashes and omissions the same way (they have the same effect), and we actually observed no message drops in our setup with 1\% (0.2$\mu$s) safety margin over our 40-days-long latency stability benchmark (cf. \autoref{fig:stable_latency}).
Just like for the previous benchmark, \ac{name} does not use the optimization of only relaying value \textsc{id}s. 
We monitor the throughput of SET requests from 60 concurrent clients over 10s, and manually stop the leader for Zookeeper and one of the replicas for \ac{name}-Zookeeper at $\sim$4s. Clients send 100K requests each.
For Zookeeper, we configure clients to continuously query the cluster upon failure and immediately switch to the new leader once leader election terminates. In \ac{conic-lb} + Zookeeper, the load-balancer reconfigures itself to remove the faulty replica, while in \ac{conic-us} we assume that the clients associated with the faulty replica crash alongside it, thus terminating co-located microservices. 
We also report Waverunner metrics from the original evaluation~\cite{waverunner} to give a rough comparison with another state-of-the-art \ac{smr} system using asynchronous consensus (Raft).

\paragraph{Results.}
\autoref{fig:failure_impact} shows the inherent advantage of \ac{name}'s \ac{ckcuc} with respect to asynchronous consensus algorithms used by both native Zookeeper (ZAB~\cite{zab}) and Waverunner (Raft~\cite{raft}).
Since \ac{ckcuc} is fully peer-based, failures cause negligible slowdown (cf. \ref{itm:downtime}), which we experimentally find to be one extra round of \ac{ckcuc} (2.7$\mu$s) most of the time. The effect is a decrease in throughput due to the increase in request load on every active replica. The election procedure upon leader failure in both Zookeeper and Waverunner results in a much longer \textit{downtime} of hundreds of ms where clients stall until the service comes back up. When the election terminates (with no recovery), Zookeeper and Waverunner do not observe a change in throughput as the same request load appies to the new leader node, which processes the same number of acknowledgments from replicas.
Recoveries in \ac{name} are treated as process joins resulting from membership changes proposed and agreed upon through \ac{ckcuc} which trivially follow the inverse trend of failover leading to an \emph{increase} in throughput.
Failure of $t + 1$ replicas results in a system stall for \ac{name} as well as for asynchronous consensus services.

\begin{figure}
    \centering
    \includegraphics[width=\linewidth]{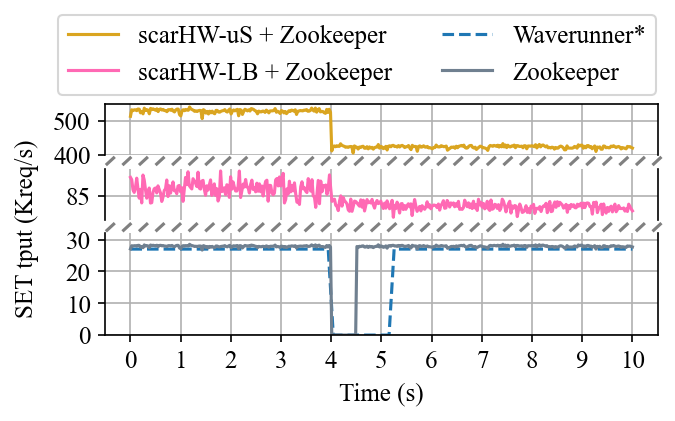}

    
    \caption{Impact of a failure on leader-based consensus applications vs applications using \ac{name}. Note that -- for fair comparison against Waverunner (*data points taken from \cite{waverunner}) -- 
    all approaches show performance with reduced number of replicas after the failure (no process  replacement
    ).}
    \label{fig:failure_impact}
\end{figure}


%% file: sections/relatedWork.tex
\section{Related Work} \label{sec:relatedWork}

\paragraph{Asynchronous algorithm acceleration.}

Mu~\cite{mu}, P4ce \cite{p4ce_dulong_icdcs24}, and $\mu$Kharon~\cite{ukharon}  propose asynchronous consensus-based algorithms heavily optimized for \ac{rdma} networks. 
Consensus in a Box~\cite{consensus_in_a_box} spearheaded \ac{fpga}-based coordination by offloading  \Ac{zab} to a \ac{snic}. 
Waverunner~\cite{waverunner} accelerates only failure-free operations of Raft with a \ac{nic}, leaving failover routines to software.
Similarly, systems to offload Paxos to network switches have been explored~\cite{paxos_in_the_nic, Dang2015NetPaxos}.
Notably, NOPaxos~\cite{Li2016NOPaxos} (follow-up of SpecPaxos~\cite{SpecPaxos_PortsNSDI2015}) is a high-performance ``leaderless'' solution that instruments network switches to achieve network ordering to effectively circumvent \autoref{itm:bottleneck}. 
\ac{name} goes beyond bare acceleration of asynchronous algorithms introducing a novel consensus algorithm which leverages synchrony to enable load-balancing, greatly improving throughput over solutions above. 

Note that many asynchronous consensus works 
assume (eventually) reliable communication~\cite{BCBT96} (via resends), while quorums (used originally to handle process failures) can inherently handle \emph{some}  message losses (
cf. \autoref{tab:comp}). The extent   
of this has only been explicitly considered   
recently~\cite{boundbft}.

\paragraph{Synchronous algorithms.}

The consensus problem in synchronous systems has been well explored in theory 
for (pure) process crash-stop failures ~\cite{distributed_algos_lynch,sync_consensus_summary_raynal,The_timely_comp_tcb_Verissimo_2002}. 
Like \ac{ckcuc}, 
many synchronous protocols 
could also reach several decisions, but choose one deterministically to implement standard (selective) consensus. \ac{ckcuc} can be trivially adapted. 

Unlike recent systems using synchrony for efficient coordination like FiDe~\cite{fide} and Nano-consensus~\cite{Nanoconsensus_RovelliSoCC25}, \ac{name} 
tolerates communication failures, increasing robustness. Chora \cite{ChoraSync_LiArxiv25} uses synchrony for pipelining, but carries on the drawbacks of the partially synchronous model.

A few works consider synchrony with omission 
~\cite{omission_sync_uniform_consensus:SPAA04, sync_consensus_summary_raynal, sync_consensus_hybrid_faults_biely, kset_agreement_sync_omission_parvedy_raynal} and timing failures~\cite{CF99}. \ac{name} is the first 
system leveraging the predictability of modern \acp{fpga} for efficient distributed coordination while tolerating omission failures.

The seminal XFT
~\cite{xft:osdi16}, which uses a synchronous quorum, 
as well as several recent works~\cite{vmware_sync_bft,LWSBFT,synchotstuff}, assume 
synchrony for \ac{bft} (even in wide-area setups~\cite{boundbft}) on top of a commodity software stack without support for it. 
\ac{name} uses dedicated resources on network hardware to support synchrony ground-up.

\paragraph{Partitioning and relaxing consistency.}

Several multi-leader and leaderless variants of Paxos~\cite{Derecho_Jha-Birman2019,EgalitarianPaxos_Moraru2013,wpaxos_ailijiang_murat_2020,paxos_variants_perf_murat_SIGMOD19,VPaxos_LamportMZ09,Marandi2010RingPaxos, Dang2015NetPaxos} 
mitigate \ref{itm:bottleneck} and \ref{itm:ft} through partitioning/sharding and separation of group management from data replication logic. Derecho~\cite{Derecho_Jha-Birman2019} 
separates asynchronous atomic broadcast and view change achieving high throughput at the expense of latency degradation with failures or stragglers (\ref{itm:downtime}).
\ac{name} fully embraces synchrony 
proposing a consensus engine suited for the stability of datacenter \acp{snic} that allows it to improve over previous systems by achieving zero downtime and low latency while being scalable also \emph{within a single partition}.
Clearly, \ac{name} can also benefit from partitioning or sharding at very large scales.

Similarly to batching and sharding, limiting consensus to a small set of $k$   ``consensus(-as-)service''~\cite{consserv} nodes can help mitigate bottlenecks (tolerating more failures among ``general'' nodes), but retains fundamental limitations of the underlying algorithm (scaling poorly to more replicas $k$) which 
be improved with our approach though. Lastly, \acl{cc} differs from $k$-set agreement~\cite{kset}, where processes each decide one from 
at most $k$ different values. 

%% file: sections/conclusion.tex
\section{Conclusions} \label{sec:conclusion}


We present \ac{name}: a network card design that provides fast and robust \ac{smr} to highly available distributed applications running in datacenters. \ac{name} differs from traditional approaches in that it uses a novel synchronous consensus algorithm which exploits the reliability and stability of modern datacenter networks. Through careful algorithm design on an off-the-shelf \ac{fpga} \ac{snic}, \ac{name} mitigates the traditional cost of synchrony while preserving its key benefit: inherently leaderless coordination. 
\ac{name}-based services can easily scale (hence supporting better fault tolerance) while also improving performance, overturning common system design principles.

%% file: sections/appendix.tex
\section{\Acf{ckc}} \label{apx:algorithms}
This section provides the fully-detailed specification of the \ac{ckc} protocol, and provides pen and paper proofs for its correctness.

\subsection{Detailed specification}
\Ac{ckcuc} (\autoref{alg:apx_ckc}, the fully-detailed version of the protocol presented in 
\iftoggle{withAppendix}{
\autoref{alg:ckcuc})
}
{
the main text
}
uses a classical round-based approach and optimal early stopping. The core logic of \ac{ckcuc} was inspired by the algorithm for omission failures of Parv{\'e}dy \& Raynal~\cite{omission_sync_uniform_consensus:SPAA04}, with substantial differences to fit practical applications, as we will discuss in \autoref{sec:apx_algo_variants}.

\paragraph{Properties of \acl{cc}}
\ac{ckcuc} solves a variant of consensus that we call \acf{cc}.
Uniform \ac{cc} has a \prim{propose} downcall and \prim{decide} upcall, satisfying the following properties:

\begin{description}[font=\textnormal]
    \item[\prop{Validity}:] If a process 
    \propprim{decide}s 
    set $V$, then every element $v$ of $V$ was \prim{propose}d.
    \item[\prop{Termination}:] Every correct process eventually \propprim{decide}s some set of values $V$.
    \item[\prop{Uniform agreement}:] No two processes $p_i, $ $p_j$ \propprim{decide}~different sets of values $V_i \neq V_j$.
\end{description}

%
Thus, \ac{cc} can decide several proposed values simultaneously. 
We assume that proposed values are distinct, e.g., messages with unique identifiers.
We use \prim{crash} to denote when a process recognizes that it is faulty and quits execution.
We also assume that processes quit execution after the \prim{decide} event and that all calls execute atomically.

\begin{algorithm}[t]
\caption{\Ac{ckc} (detailed). Executed by every $p_i \in \Pi$.}
\label{alg:apx_ckc}

\SetKw{Elif}{elif}{}
\SetKwBlock{ElifBlock}{elif}{}
\SetKwBlock{Func}{func}{}
\SetKw{Then}{then}{}

\BlankLine
\nl$V_i, ~suspect_i, ~locked_i \gets \emptyset$\;
\nl$roundLiveset \gets \Pi$\;
\nl$r_i \gets 0$\;

\BlankLine

\nl\To(\algprim{propose}{(val)}:) {
    \lnl{line:ckcapx_proposevalupdate}$V_i \gets \{val\}$\;
    \nl\prim{start-timer}
}

\nl\Upon(\prim{timeout}:) {
    \algoboxAE{
    \begin{minipage}{0.5\linewidth}
    \lnl{line:ckcapx_suspectUpdate}$suspect_i \gets \Pi \setminus roundLiveset$\;
    \end{minipage}
    }

    \algoboxDQ{
    \begin{minipage}{0.75\linewidth}
    \lnl{line:ckcapx_crash}\lIf{$|suspect_i| > t$} { \prim{crash} }
    \end{minipage}
    }

    \algoboxLK{
    \begin{minipage}{0.75\linewidth}
    \lnl{line:ckcapx_selfLockedCheck}\If{$p_i \notin locked_i$}{
        \lnl{line:ckcapx_lockedCheck}\If{$r_i > |suspect_i|$ \textnormal{\bf or} $locked_i \ne \emptyset$}{
            \lnl{line:ckcapx_lockedSelfAdd}$locked_i \gets locked_i\cup \{p_i\}$\;
        }  
    }
    \end{minipage}%
    }

    \algoboxDQ{%
    \begin{minipage}{0.75\linewidth}
    \lnl{line:ckcapx_decideCrash}\Elif $|locked_i| > t$ \Then \algprim{decide}{(V_i)}\;
    \lnl{line:ckcapx_decidebound}\lIf{$r_i > t$}{ \algprim{decide}{(V_i)}}
    \end{minipage}%
    }

    \lnl{line:ckcapx_nextRound}$r_i \gets r_i + 1$ \tcp{advance to next round}
    \lnl{line:ckcapx_livesetReset}\algoboxAE{$roundLiveset \gets \{p_i\} \cup locked_i$}\;
    \nl\prim{restart-timer}\;
    \lnl{line:ckcapx_multiSend}\algprim{send}{(r_i, ~V_i, ~locked_i)} \algoboxAE{to every $p_j \notin suspect_i$}\;
}

\lnl{line:ckcapx_recv}\Upon(\algprim{recv}{(r_j, ~V_j, ~locked_j)} 
:) {
    \lnl{line:ckcapx_roundMismatch2}\lIf{$r_j = r_i + 1$}{Lines \ref{line:ckcapx_suspectUpdate}-\ref{line:ckcapx_multiSend}}
    \lnl{line:ckcapx_syncEnforce}\If{\algoboxAE{$r_j = r_i$ \textnormal{\bf and}  $p_j \notin suspect_i$}}{
        \lnl{line:ckcapx_vrecv}$V_i \gets V_i \cup V_j$\;
        \algoboxLK{
        \lnl{line:ckcapx_lockedUpdate}$locked_i \gets locked_i \cup locked_j$\;
        }
        \lnl{line:ckcapx_livesetUpdate2}\algoboxAE{$roundLiveset \gets roundLiveset \cup \{p_j\} \cup locked_i$}
    }
}
\end{algorithm}

\paragraph{System model reminders and timer primitives.}
\ac{ckc} assumes a synchronous model with crash-stop and omission failures, like \cite{omission_sync_uniform_consensus:SPAA04}, to better capture the stability of programmable network hardware in modern datacenter environments. Synchrony assumptions are required to guarantee liveness, but not safety, as we will discuss shortly. We break down the synchrony assumptions into three parameters: $\Delta_I$, $\Delta_D$ and $\Delta_W$.
$\Delta_I$ bounds most interaction latency between any two processes $p_i, p_j \in \Pi$ , i.e., communication between processes together with end-to-end processing happens most of the time within $\Delta_I$. 
$\Delta_D$ bounds clock drifts between two processes, which is easily achieved in practice and assumed by state-of-the-art datacenter coordination works, including for correctness, e.g.~\cite{ukharon}.
We further assume that all processes call \prim{propose} within a fixed time window $\Delta_{W}$.
\ac{ckcuc} interacts with a timer abstraction through the \prim{start-timer} and \prim{restart-timer}. The timer calls \prim{timeout} every time interval of duration $\Delta_I + \Delta_W + \Delta_D$ , i.e., the maximum delay in which correct processes can expect to receive a message. 

\paragraph{Failure model reminders.}
A process is \emph{faulty} if it commits a crash-stop or at least one omission failure, otherwise it is \emph{correct}. Omission failures occur when processes fails to send of receive a message, capturing both message loss and arbitrary delays. 
We consider a system of $N = 2t + 1$ processes with at most $t$ faulty processes (crash-stop or omission failures). A quorum of correct processes $Q$ exists, where $|Q| \ge t+1$ with $t$ being the maximum number of faulty processes. We use $f$ to denote the actual number of faulty processes in a given execution. \ac{ckc} is live as long as a quorum of correct processes exists, and always safe, even if $f > t$.

\paragraph{Algorithm (\autoref{alg:apx_ckc}).}

Unlike common specifications of synchronous algorithms, rounds are driven by the timer abstraction rather than by a global round counter: \prim{propose} (called asynchronously by processes) arms the timer through \prim{start-timer}, while \prim{timeout} closes the current round (lines \ref{line:ckcapx_suspectUpdate}-\ref{line:ckcapx_nextRound}) and opens the next one (line \ref{line:ckcapx_nextRound}-\ref{line:ckcapx_multiSend}). 
The body of the \prim{timeout} routine, i.e., the round change routine, is also invoked in advance through whenever a message from a process that is one roundfurther ahead is received (\autoref{line:ckcapx_roundMismatch2}). 
This mechanism ensures that processes always multi-\prim{send} their value before \prim{recv}ing any message from that round, proceeding synchronously, unlike common synchronlous algorithms which assume perctly synchronized rounds~\cite{sync_consensus_summary_raynal,sync_consensus_hybrid_faults_biely}.
We will discuss below how, in practice, $p_i$ never receives messages with a round larger than $r_i + 1$.

At the core, in every round, a process $p_i$ multi-sends its own proposal set $V_i$ containing all the values it knows together with its $locked_i$ set (\autoref{line:ckcapx_multiSend}); in the first round, $V_i$ will contain only the $p_i$'s proposal (\autoref{line:ckcapx_proposevalupdate}). 
Incoming values and locks are merged into $V_i$ and $locked_i$ (\autoref{line:ckcapx_vrecv} and \autoref{line:ckcapx_lockedUpdate}) and relayed in later rounds.

The protocol uses the same mechanisms as 
\iftoggle{withAppendix}{
    \autoref{alg:ckcuc}
}
{   
    the main paper
} 
 (highlighted with matching colors) to determine whether it is safe to decide on $V_i$:

\begin{description}[font=\textnormal,leftmargin=\parindent]
    \item[\algoboxAE{Active exclusion:}] $roundLiveset$ records the processes from which $p_i$ has received a message in the current round (\autoref{line:ckcapx_livesetUpdate2}) and is reset to $\{p_i\}$ whenever a round is closed, so that suspicions are always recomputed on evidence gathered within a single round. At the start of a round, every process missing from $roundLiveset$ is added to $suspect_i$ and is permanently excluded from the send (\autoref{line:ckcapx_multiSend}) and receive (\autoref{line:ckcapx_recv}) operations of all future rounds: an excluded process can never re-enter $roundLiveset$, hence it remains suspected forever. This mechanism filters out late messages, so arbitrary delays are turned into omission failures. Hence, a process can never learn a new value from a process it suspects.
    Active exclusion also implies that $p_i$ can only receive messages from processes which are at most one round ahead of it, i.e., $r_j = r_i + 1$, since a process that is two rounds ahead will have already excluded $p_i$ from its send set.
    \item[\algoboxLK{Locking:}] a process $p_i$ is ``locked'' when it knows all the values that can be known in the current round, and consequently no further value can be learnt. This mechanism is presented in related synchronous protocols and it is necessary to guarantee \prop{Uniform Agreement}~\cite{omission_sync_uniform_consensus:SPAA04,garg2002elements,sync_consensus_summary_raynal}. The locking condition evaluates when $p_i$ has completed more rounds than the number of processes it suspects (\autoref{line:ckcapx_lockedCheck}), since by then it has heard everything its peers know, including (possibly) values from processes that $p_i$ has suspected earlier than its peers. Intuitively, thanks to the active exclusion mechanism, late messages from $p_j \in suspect_i$ cannot reach $p_i$ after it is locked, hence it will not learn any new values. 
    Also, $p_i$ updates its locked set when it receives a non-empty $locked_j$ from another process $p_j$, in which case $p_i$ adopts $locked_j$ (\autoref{line:ckcapx_lockedUpdate}) and later adds itself to it (\autoref{line:ckcapx_lockedSelfAdd}) since it has now all the values that $p_j$ knows.
    \item[\algoboxDQ{Decide or quit:}] if $p_i$ suspects more than $t$ processes, it learns that it is faulty, i.e., does not belong to the quorum $Q$ of correct processes. Hence, it declares itself faulty and quits execution (\autoref{line:ckcapx_crash}). 
    $p_i$ decides early when it receives a message from a majority of locked processes, i.e., $|locked_i| \ge t+1$ (\autoref{line:ckcapx_decideCrash}); $t+1$ out of $N = 2t+1$ is a majority, so any two processes that decide this way have seen locked sets that overlap, and no process can decide on knowledge that such a majority does not share. This is what guarantees the uniformity of the agreement property. 
    Even if the ``locked majority'' condition is not met, processes can still decide in in round $t+ 1$ if they are still alive by then (\autoref{line:ckcapx_decideCrash}). 
    The sufficiency of $t+1$ follows from the classical pigeonhole argument~\cite{LowerboundConsistency_FischerLynchIPL1982, aguilera_sync_consensus_proof}: faulty processes can each spoil at most one round, so among $t+1$ rounds at least one must be failure-free (pigeonhole); once that clean round happens, all correct processes end up with the same information. 
\end{description}

The protocol terminates in $\min(f+2, t+1)$ rounds, a proven lower bound -- a proven lower bound for uniform consensus in synchronous systems with omission failures~\cite{aguilera_sync_consensus_proof, sync_consensus_hybrid_faults_biely}.

\paragraph{Important differences to Parv\'{e}dy~\&~Raynal~\cite{omission_sync_uniform_consensus:SPAA04}.}

To fit practical applications, \ac{ckcuc} differs in three key aspects.

\begin{enumerate}[leftmargin=\parindent,itemindent=1mm]

\item\textbf{Deciding on sets.}
\ac{ckcuc} 
decides a set of values $V$ rather than a single value -- solving \ac{cc} as opposed to classical ``selective'' consensus. This simple optimization is highly effective as it substantially increases 
throughput and mitigates the cost of all-to-all message relays. 
\item\textbf{Handling round mismatch and late messages.}
\ac{ckcuc} explicitly considers initial synchronization window and bounded clock drift, rather than a perfectly synchronized round counter~\cite{omission_sync_uniform_consensus:SPAA04} which makes algorithms more compact but is unimplementable in practical systems.
As a result, a process may lag behind, e.g., receive a message before even starting.
\ac{ckcuc} processes explicitly handle messages from processes one round ahead of them by immediately triggering the start of the next round in advance (\autoref{line:ckcapx_roundMismatch2}). Processes also explicitly discard late messages from slow peers (\autoref{line:ckcapx_syncEnforce}) to avoid polluting their $V$ and $locked$ sets, which would otherwise lead to inconsistent decisions.
\item\textbf{Full value set relay to simplify verification.}
Processes in \ac{ckc} always multi-send the full value set $V_i$ instead of just the values learned in the previous round~\cite{omission_sync_uniform_consensus:SPAA04}.
This does not improve efficiency, but it makes the algorithm easier to understand and considerably cheaper to model check, which is what allowed us to verify \ac{ckcuc} with the \apalache{} model checker~\cite{OtoniFKKMOPTK:2026}.
Full-relay keeps the reachable state graph smaller than $new$-relay does, mainly by removing the per-process $new$ variable and the round-boundary action resetting it, as a message payload is simply a projection of the sender's current state.
Furthermore, \ac{ckc} increases the bandwidth consumption \textit{only in the occurrence of failures}, i.e., with more than 2 rounds, as both versions have processes multi-send $V$ in the second round. As failures are extremely rare, \ac{ckc} benefits from verified correctness at little practical cost. 

\end{enumerate}

\subsection{Formal verification}
We specified \ac{ckcuc} in TLA$^+$~\cite{Lamport:2002} and checked its properties using the \apalache{} model checker~\cite{OtoniFKKMOPTK:2026}. All properties of \ac{ckcuc} were shown to hold. The full formal specification and instructions for independent checking are readily available upon request and will be published upon acceptance of the paper.
In addition we provide a detailed correctness pen-and-paper proof of \ac{ckcuc} below.

\subsection{Correctness}

\input{sections/appendix/correctness_parv_raynal.tex}

\section{\ac{ckc}-based \acf{smr}}
This section presents the \ac{tob} abstraction built on top of \ac{ckc} to achieve \ac{smr}, a variant of \ac{ckc} that relays full value sets and a discussion of possible algorithmic optimizations.

\subsection{Building blocks for total order broadcast}

\newacro{fckcuc}[POPUC]{}\acused{fckcuc}


To simplify presentation, we introduce in this section the abstractions that are used as part of the full \ac{tob} specification, and then detail \ac{tob} itself in the next section. To better distinguish between different algorithms, we refer in the following to the latter algorithm from \autoref{alg:apx_ckcuc},  dubbed simply \ac{ckcuc} in the main paper, explicitly as \ac{fckcuc} in the following.

\paragraph{Synchronized start.}
Synchronous algorithms that are round-based, including \ac{fckcuc}, leverage the assumption that all processes start within a fixed time window. In real-world applications, we need a way to enforce this by transitioning from the asynchronous model to the synchronous one. \autoref{alg:syncstart} introduces a \prim{sync-start} primitive for this purpose, using a leader to unequivocally broadcast a starting signal for a round-based synchronous algorithm. A reliable (``perfect'') failure detector $\mathcal{P}$ ensures that a group of correct processes will eventually start.
The maximum synchronization time window $\Delta_W$ is equal to twice the upper-bounded interaction latency $\Delta_I$ of our model. Note that $\mathcal{P}$ (commonly approximated in practice by the use of worst-case timeouts and selective process killing) is needed exclusively for this primitive, whose use is limited to the first phase of the \ac{tob} algorithm, as we will see shortly. Recent work~\cite{fide} demonstrates the feasibility of implementing such $\mathcal{P}$ in datacenters.
\prim{sync-start} uses similar logic as precise clock synchronization (e.g., \acl{ptp}~\cite{ptp_ieee1588}) mechanisms, which can be used instead or in combination with \prim{sync-start} to trigger the start of round-based synchronous algorithm.

\begin{algorithm}[t!]
\caption{\prim{sync-start}. Uses reliable failure detector $\mathcal{P}$. Executed by every $p_i \in \Pi$.}
\label{alg:syncstart}

\nl$leader \gets ~$\prim{min}{($\Pi$)}\;
\nl$started \gets \algval{false}$

\nl\To(\algprim{sync-start}{}:) {
\nl     \eIf{$p_i = leader$ \Kwand $started = \algval{false}$}{
\nl         \prim{send}{(\algval{start})} to 
every $p_j \in \Pi$\;
\nl         $started \gets \algval{true}$
        }{
\nl         \prim{send}{(\algval{start-request})} to $leader$\;
        }
}

\nl\Upon(\prim{recv}{(\algval{start-request})}:){
\nl     \prim{send}{(\algval{start})} to 
every $p_j \in \Pi$\;
\nl     $started \gets \algval{true}$
}

\nl\Upon(\prim{recv}{(\algval{start})}:){
\nl     \prim{start}
}   

\nl\Upon(\algprim[$\mathcal{P}$]{failure}{(p_j)} \Kwand $started = \algval{false}$:){
\nl     $\Pi \gets \Pi \setminus ~\{p_j\}$\;
\nl     $leader \gets ~$\prim{min}{($\Pi$)} \tcp{leader election}
\nl     \algprim{sync-start}{}
}
\end{algorithm}

\paragraph{\Acf{ckcucmi}.}
\acused{ckcucmi}
The second building block 
is an adaptation of \ac{fckcuc}  to account for multiple sequential consensus instances used by the \ac{tob} abstraction.
In \ac{ckcucmi}, a process $p_i$ adds a consensus instance number ($c_i$) to relayed messages. $c_i$ is used to keep track of possible mismatching consensus instances between different processes, e.g., when $p_j$ advances to the next consensus instance before $p_i$. 
Processes proceed with \ac{fckcuc} logic only when they receive messages belonging to the current consensus instance, ignore messages from prior instances, and declare themselves faulty through \algprim{crash}{} when they notice they are a number of instances ``behind''.








\acused{ckctob}

\subsection{Popular total order broadcast (\ac{ckctob})}

The \ac{ckctob} primitive uses \prim{sync-start} and multiple sequential instances of \ac{ckcucmi} to implement \ac{tob}. 
We use the uniform variant of \ac{tob} with a \prim{broadcast} downcall and a \prim{deliver} upcall which satisfies the following properties:

\begin{description}[
font=\textnormal]
    \item[\prop{Validity}:] If a correct process $p_i$ \prim{broadcast}s a message $m$, then $p_i$ eventually \prim{deliver}s $m$.
    %
    \item[\prop{No duplication}:] No message is \prim{deliver}ed twice.
    \item[\prop{No creation}:] If a process \prim{deliver}s a message $m$ with sender $p_i$, then $m$ was previously \prim{broadcast} by process $p_i$.
    \item[\prop{Uniform agreement}:] If a message $m$ is \prim{deliver}ed by some process (correct or not), them $m$ is eventually \prim{deliver}ed by every correct process.
    \item[\prop{Uniform total order}:] Let $m_1$ and $m_2$ be any two messages and suppose $p_i$ and $p_j$ are any two processes (correct or faulty) that \prim{deliver} $m_1$ and $m_2$. If $p_i$ \prim{deliver}s $m_1$ before $m_2$, then $p_j$ \prim{deliver}s $m_1$ before $m_2$.
\end{description}

\begin{algorithm}
\caption{\ac{ckctob}. Executed by every $p_i \in \Pi$. Uses \ac{ckcucmi}.}
\label{alg:ckctob}

\nl$pending, delivered \gets \emptyset$\;
\nl$c_i \gets 0$\;

\nl\To(\algprim{broadcast}{(v)}:) {    
\nl     $pending \gets pending \cup \{v\}$\;
\nl     \If{$c_i = 0$}{
\nl         $c_i \gets c_i + 1$\;    
\nl         \algprim{sync-start}{(\algprim{propose}{(c_i, ~v))}}
}
}

\nl\Upon(\algprim{decide}{(V_i)}:) {
\nl     \ForAll{$v \in  V_i \mid v \ne \square$}{
\nl         \If{$v \notin delivered$}{
\nl         \algprim{deliver}{(v)}\;
\nl         $pending \gets pending \setminus \{v\}$\;
\nl          $delivered \gets delivered \cup \{v\}$
}
}
\nl     $c_i \gets c_i + 1$\;
\nl     \eIf{$pending = \emptyset$}{
\nl         \algprim{propose}{(c_i, ~\square)}
        }{
\nl         \algprim{propose}{(c_i, ~\prim{next}{(pending)})}
        }
}

\end{algorithm}

\paragraph{Algorithm.}
\autoref{alg:ckctob} outlines \ac{ckctob}. It uses a series of monotonically-increasing, back-to-back consensus instances to order a list of values. 
\ac{ckctob} uses the \algprim{sync-start}{} primitive at the first \prim{broadcast} call to ensure that all processes \ac{ckcucmi}-\prim{propose} within a fixed time window $\Delta_W$. 
Future
\prim{broadcast} events add values to the $pending$ set which acts as a buffer for \prim{broadcast} calls in quick succession (line 4). The \prim{next} primitive is used to randomly pick a value from $pending$ to 
propose in the following consensus instance (line 17). If there are no pending values, the $\square$ value is proposed 
instead to serve as a heartbeat for other processes (line 16). This is a key mechanism to avoid false positives, since processes not participating in a round are considered to be faulty in our model with omission failures. Furthermore, heartbeats maintain the synchronous operation of the algorithm in the temporary absence of a message: running consensus instances independently, i.e., not back-to-back, would require to \prim{sync-start} every consensus instance, leading to expensive message relays.
Heartbeat values $\square$ are filtered before delivery 
 (line 10).
 \textbf{forall} to be deterministic and cycle through elements of the $V_i$ set in the same order for all processes which is easily achieved, e.g., by using message and sender \textsc{id}s. 
In failure-free executions, all processes execute a series of consensus instances, each of which take exactly 2 rounds thanks to the early stopping feature of \ac{ckcucmi} ($t+2$ rounds as lower bound). Upon failures, some processes might take additional rounds, leading to overlapping consensus instances. The \ac{ckcucmi} abstraction (\autoref{alg:ckcucmi}) deals with this issue by ensuring that only instance-aligned processes participate in \ac{ckctob}.

%% file: sections/appendix/correctness_parv_raynal.tex
We prove that \autoref{alg:apx_ckc} satisfies \prop{Validity}, \prop{Termination} and \prop{Uniform agreement} in a system of $N = 2t+1$ processes, of which at most $t$ are faulty. The proof follows the structure of Parv{\'e}dy \& Raynal~\cite{omission_sync_uniform_consensus:SPAA04}.

\paragraph{Notation.} A process \emph{closes} round $r$ when it runs the round change routine (lines~\ref{line:ckcapx_suspectUpdate} to \ref{line:ckcapx_multiSend}), either on \prim{timeout} or through the catch-up of \autoref{line:ckcapx_roundMismatch2}. It \emph{crashes at}, \emph{becomes locked at}, or \emph{decides at} $r$ when it executes \autoref{line:ckcapx_crash}, \autoref{line:ckcapx_lockedSelfAdd}, or one of lines~\ref{line:ckcapx_decideCrash} and~\ref{line:ckcapx_decidebound} while closing $r$. We write $V_i[r]$, $suspect_i[r]$ and $locked_i[r]$ for the values of the corresponding variables at the end of that routine, and $locked_i^{-}[r]$ for the value of $locked_i$ when \autoref{line:ckcapx_lockedCheck} is evaluated, i.e. after all the merges of round $r$ but before $p_i$ may add itself. If $p_i$ decides at $r$ we let $V_i[r'] = V_i[r]$ for every $r' > r$, since $V_i$ is no longer updated. We also write: 
\begin{itemize} 
    \item $M_i[r]$ for the set of processes whose round $r$ message $p_i$ merged; \item $stopped[r]$ for the processes that crashed at some round $r' \le r$ 
    \item $good[r] = \Pi \setminus stopped[r]$ for those still running or already decided; 
    \item $V[r] = \bigcup_{p_i \in good[r]} V_i[r]$ for the values known by the processes of $good[r]$. 
\end{itemize} 

A process \emph{knows} $v$ at $r$ if $v \in V_i[r]$, and \emph{learns} $v$ at $r$ if $v \in V_i[r] \setminus V_i[r-1]$. The \emph{round $r$ message} of $p_i$ is the one it sends at \autoref{line:ckcapx_multiSend} while closing round $r-1$; it carries $V_i[r-1]$ and $locked_i[r-1]$.

\paragraph{Preliminary observations.}

\begin{observation}[Asynchronous round alignment]
    \label{obs:align} A message is merged only by a process that is in the round the message is tagged with. Hence $p_i$ merges the round $r$ message of $p_j$ only while it is in round $r$, and that message carries $V_j[r-1]$ and $locked_j[r-1]$. 
\end{observation}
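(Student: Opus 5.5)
This observation should follow directly from the guard at \autoref{line:ckcapx_syncEnforce} together with the catch-up rule at \autoref{line:ckcapx_roundMismatch2} and the notation just fixed. I will argue it in three short steps.

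\emph{Step 1: merging happens only in the message's round.} A merge means executing lines~\ref{line:ckcapx_vrecv}--\ref{line:ckcapx_livesetUpdate2}. These lines are guarded by $r_j = r_i$, where $r_j$ is the tag of the message and $r_i$ is the receiver's round counter at the moment the guard is evaluated. The value of $r_i$ at that moment may already have been advanced by the catch-up at \autoref{line:ckcapx_roundMismatch2}. That catch-up runs the round change routine, which increments $r_i$ at \autoref{line:ckcapx_nextRound}, so after it $r_i = r_j$. Because all calls execute atomically, no other event interleaves between the catch-up and the guard. Messages with $r_j < r_i$ fail the guard and are discarded. A message with $r_j \ge r_i + 2$ is not caught up and also fails the guard. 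So whenever a merge occurs, $p_i$ is in round $r_j$.

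\emph{Step 2: the content of a round $r$ message.} By the definition in the notation paragraph, the round $r$ message of $p_j$ is the one sent at \autoref{line:ckcapx_multiSend} while $p_j$ closes round $r-1$. Within that routine, $r_j$ is incremented to $r$ before the send, so the message is tagged $r$. It carries the current $V_j$ and $locked_j$. These are exactly $V_j[r-1]$ and $locked_j[r-1]$, the values at the end of the round change routine for round $r-1$. The routine only possibly adds $p_j$ to $locked_j$ (\autoref{line:ckcapx_lockedSelfAdd}) and resets $roundLiveset$; it does not modify $V_j$ after the send. Atomicity again guarantees that no merge slips in between the lock update and the send.

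\emph{Main obstacle.} The one subtle point is that $V_i[r]$ and $locked_i[r]$ are indexed by the end of the routine closing round $r$. I must check that the snapshot sent is taken after the lock update at \autoref{line:ckcapx_lockedSelfAdd}, not before. This holds because the send at \autoref{line:ckcapx_multiSend} is the last action of the routine. If $p_j$ decided or crashed while closing $r-1$, it quits and sends nothing, so there is no round $r$ message and the claim is vacuous. Combining Steps 1 and 2 gives the observation.
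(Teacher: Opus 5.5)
Your proof is correct and matches the paper's argument. A merge requires the guard $r_j = r_i$ of \autoref{line:ckcapx_syncEnforce}; a message tagged $r_i+1$ first triggers the catch-up of \autoref{line:ckcapx_roundMismatch2}, after which the guard holds; and messages with $r_j < r_i$ or $r_j > r_i+1$ are discarded. Your Step~2 checks that the send is the last action of the round change routine, so the payload is exactly $V_j[r-1]$ and $locked_j[r-1]$; this just makes explicit what the paper takes directly from its definition of the round $r$ message.
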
 
\begin{proof} 
    \autoref{line:ckcapx_syncEnforce} merges only when $r_j = r_i$. A message with $r_j = r_i + 1$ first triggers the round change of \autoref{line:ckcapx_roundMismatch2}, after which $r_i = r_j$ and the test succeeds; messages with $r_j < r_i$ or $r_j > r_i + 1$ are discarded. Note that this is a property of the protocol text alone: it holds whatever the message delays. 
\end{proof}

\begin{assumption}[Synchronous rounds]
    \label{ass:sync} 
    The timeout duration $\Delta_I + \Delta_W + \Delta_D$ is chosen so that the round $r$ message of a correct process reaches every correct process before the latter closes round $r$. In particular correct processes never lag behind one another by more than one round, so their messages are never discarded as stale.
\end{assumption} 

Note that safety properties (\autoref{th:validity} and \autoref{th:agreement}) rely only on \autoref{obs:align} and not assumed time bounds. 
Timing assumptions (\autoref{ass:sync}) are used only for liveness (\autoref{th:termination}), through the following property:  

\begin{observation}[Active exclusion]\label{obs:suspect} 
    
    If $p_i$ closes round $r$, then $suspect_i[r] = \Pi \setminus (\{p_i\} \cup locked_i^{-}[r] \cup M_i[r])$. In particular, if $locked_i^{-}[r] = \emptyset$ then $p_i$ merged the round $r$ message of every process it does not suspect at $r$. Moreover, if $p_j \in suspect_i[r]$ and $p_j \notin locked_i$, then $p_i$ merges no further message of $p_j$ and sends it none. 
\end{observation}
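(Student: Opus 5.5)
The plan is to track how $roundLiveset$ evolves during round $r$ and then read off $suspect_i[r]$ at \autoref{line:ckcapx_suspectUpdate}. When $p_i$ opens round $r$ (while closing $r-1$, \autoref{line:ckcapx_livesetReset}), it sets $roundLiveset \gets \{p_i\} \cup locked_i[r-1]$. After that, the only place $roundLiveset$ changes is \autoref{line:ckcapx_livesetUpdate2}. That line runs exactly when $p_i$ merges a round $r$ message of some $p_j$ (by \autoref{obs:align}), and it adds $p_j$ together with the current $locked_i$.

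First I would show by induction over the merges of round $r$ that, just before the round change routine runs,
\[
roundLiveset = \{p_i\} \cup M_i[r] \cup locked_i^{-}[r].
\]
The inclusion $\subseteq$ holds for two reasons. The initial $locked_i[r-1]$ is contained in $locked_i^{-}[r]$, because $locked_i$ only grows. Each later addition is either a merged sender or a subset of the current $locked_i$, and that set is also contained in $locked_i^{-}[r]$. For $\supseteq$, every merged sender is added. For the locked part, there are two cases. If no merge happened, then $locked_i^{-}[r] = locked_i[r-1]$, which was inserted at the reset. Otherwise, after the last merge, $roundLiveset$ absorbs the whole current $locked_i$, and no later event changes $locked_i$ before the closing routine evaluates \autoref{line:ckcapx_lockedCheck}. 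The closing routine then computes $suspect_i = \Pi \setminus roundLiveset$, which gives the claimed identity.

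The "in particular" clause is then immediate. If $locked_i^{-}[r] = \emptyset$, every $p_j \neq p_i$ outside $suspect_i[r]$ lies in $M_i[r]$.

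For the "moreover" part, take $p_j \in suspect_i[r]$ with $p_j \notin locked_i$. The guard at \autoref{line:ckcapx_multiSend} excludes $p_j$ from $p_i$'s sends in round $r+1$, and the guard at \autoref{line:ckcapx_syncEnforce} forbids merging $p_j$'s messages while $p_j \in suspect_i$. It then remains to show that $p_j$ stays suspected at every later round $r' > r$. I would prove this by induction on $r'$. In round $r'$, $p_j$ cannot enter $roundLiveset$ as a merged sender, because merging is blocked. It also cannot enter through $locked_i$, because $p_j \notin locked_i$ by hypothesis. So $p_j \notin roundLiveset$ when round $r'$ closes, and hence $p_j \in suspect_i[r']$.

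The main obstacle is the interpretation of "$p_j \notin locked_i$". If $p_j$ later enters $locked_i$ through a merge from a third process, it re-enters $roundLiveset$ and is no longer suspected. So the hypothesis must be read as holding at the times in question. I would state it as "as long as $p_j \notin locked_i$", which is exactly what the induction above uses.
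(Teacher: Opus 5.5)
Your proof is correct and follows the paper's argument. Like the paper, you track $roundLiveset$ from its reset to $\{p_i\}\cup locked_i[r-1]$ through the merges of round $r$, note that the accumulated $locked_i$ contributions equal $locked_i^{-}[r]$, and take the complement when the round closes. For the last claim the paper just points to the receive and send guards, while your induction over later rounds, with $p_j \notin locked_i$ read as holding throughout, spells out why the suspicion survives each recomputation of $suspect_i$, which the paper leaves implicit.
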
 
    
\begin{proof} 
        $roundLiveset$ is reset to $\{p_i\} \cup locked_i$ when round $r-1$ is closed (\autoref{line:ckcapx_livesetReset}) and then grows, at every merge, with the sender and with the current value of $locked_i$ (\autoref{line:ckcapx_livesetUpdate2}); since $locked_i$ only grows, the union of these values is $locked_i^{-}[r]$, which subsumes $locked_i[r-1]$. \autoref{line:ckcapx_suspectUpdate} takes the complement. The last claim is the guard of \autoref{line:ckcapx_recv} and the send set of \autoref{line:ckcapx_multiSend}. The exception on $locked_i$ plays the role of the one used in~\cite{omission_sync_uniform_consensus:SPAA04}: a process that decides early stops sending, and without the exception a correct process could accumulate more than $t$ suspicions and quit at \autoref{line:ckcapx_crash} without deciding. 
\end{proof}

\begin{observation}[Relay]\label{obs:relay} If $p_i$ merges the round $r$ message of $p_j$, then $V_j[r-1] \subseteq V_i[r]$ and $locked_j[r-1] \subseteq locked_i^{-}[r]$ (lines~\ref{line:ckcapx_vrecv} and~\ref{line:ckcapx_lockedUpdate}). \end{observation}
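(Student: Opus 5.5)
The Relay observation is local: it concerns a single merge event. I would prove it by reading off what happens when $p_i$ executes lines~\ref{line:ckcapx_vrecv} and~\ref{line:ckcapx_lockedUpdate}, and then using monotonicity of $V_i$ and $locked_i$ until the relevant snapshot points.

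First I would fix the message being merged. By the notation, the round $r$ message of $p_j$ is the one sent at \autoref{line:ckcapx_multiSend} while $p_j$ closes round $r-1$. It therefore carries exactly $V_j[r-1]$ and $locked_j[r-1]$: the send is the last statement of the round change routine, after lines~\ref{line:ckcapx_lockedSelfAdd}--\ref{line:ckcapx_livesetReset}, and nothing modifies $V_j$ or $locked_j$ in between. By \autoref{obs:align}, $p_i$ merges this message only while $r_i = r$, i.e.\ after it has closed round $r-1$ and before it closes round $r$. This holds whether the merge follows directly or follows the catch-up of \autoref{line:ckcapx_roundMismatch2}, because in the catch-up case the round change is executed first and the merge then happens in round $r$.

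Second, I would record the effect of the merge: $V_i \gets V_i \cup V_j[r-1]$ and $locked_i \gets locked_i \cup locked_j[r-1]$. From that moment until $p_i$ closes round $r$, both variables only grow. The only assignments to them are unions at lines~\ref{line:ckcapx_vrecv}, \ref{line:ckcapx_lockedUpdate} and~\ref{line:ckcapx_lockedSelfAdd}. \autoref{line:ckcapx_proposevalupdate} runs only once, at \prim{propose}, before any round. So at the moment \autoref{line:ckcapx_lockedCheck} is evaluated while closing round $r$, we get $locked_j[r-1] \subseteq locked_i^{-}[r]$. At the end of that routine we get $V_j[r-1] \subseteq V_i[r]$.

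The only delicate point is the case where $p_i$ decides or crashes. If $p_i$ decided at an earlier round it would no longer be in round $r$, so it could not merge the message at all. If it decides at $r$, the convention fixing $V_i[r']$ is consistent with the argument above. If it crashes while closing $r$, the snapshots $V_i[r]$ and $locked_i^{-}[r]$ are still taken after the merges, so the claim holds; I would state this explicitly. I would also note that the atomicity assumption guarantees that no merge interleaves inside the round change routine. This makes ``after all merges of round $r$'' well defined, and I expect it to be the only step needing care.
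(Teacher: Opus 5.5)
Your proposal is correct and follows the paper's own justification, which consists only of pointing to the merge at lines~\ref{line:ckcapx_vrecv} and~\ref{line:ckcapx_lockedUpdate} together with \autoref{obs:align} (the round $r$ message carries $V_j[r-1]$ and $locked_j[r-1]$, and is merged only while the receiver is in round $r$); you make explicit the monotonicity and atomicity steps that the paper leaves implicit. The one thing you assert rather than derive is that the overwrite at \autoref{line:ckcapx_proposevalupdate} precedes every merge. The pseudocode does not enforce this; it relies on \prim{propose} being called before any round $1$ message arrives. The paper makes the same implicit assumption, e.g.\ in \autoref{lem:local}, so this is not a gap relative to its proof.
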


\begin{observation}[Locking]\label{obs:locked} $locked_i$ never shrinks and a process becomes locked at most once. Moreover: \begin{enumerate} \item if $locked_i^{-}[r] \ne \emptyset$ then $p_i$ is locked at $r$ at the latest; equivalently, if $p_i$ is locked at no round $\le r$ then $locked_i^{-}[r'] = \emptyset$ for every $r' \le r$; \item $p_k \in locked_i^{-}[r]$ implies that $p_k$ became locked at some round $\le r-1$, unless $p_k = p_i$. \end{enumerate} \end{observation}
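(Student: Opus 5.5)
The three claims of Observation~\ref{obs:locked} follow by inspecting the code of \autoref{alg:apx_ckc}, using Observation~\ref{obs:align} to fix the rounds at which merges happen. The plan is to establish monotonicity first, then derive items 1 and 2 from it.

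\textbf{Monotonicity and locking at most once.} I would scan every assignment to $locked_i$. There are only three: the initialisation to $\emptyset$, \autoref{line:ckcapx_lockedSelfAdd}, which sets $locked_i \gets locked_i \cup \{p_i\}$, and \autoref{line:ckcapx_lockedUpdate}, which sets $locked_i \gets locked_i \cup locked_j$. Both updates are unions, so $locked_i$ never shrinks. ``Becoming locked'' means executing \autoref{line:ckcapx_lockedSelfAdd}, and that line is guarded by $p_i \notin locked_i$ (\autoref{line:ckcapx_selfLockedCheck}). After it executes, $p_i \in locked_i$ holds forever by monotonicity, so the guard can never succeed again.

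\textbf{Item 1.} Suppose $locked_i^{-}[r] \neq \emptyset$ and $p_i$ is still running when it evaluates \autoref{line:ckcapx_selfLockedCheck} at round $r$. There are two cases. If $p_i \in locked_i$ already, then $p_i$ was added at an earlier round, since only $p_i$ ever inserts itself into its own set except through a merge, and I address the merge route below. Otherwise the second disjunct of \autoref{line:ckcapx_lockedCheck} holds, so $p_i$ becomes locked at $r$. The subtle sub-case is $p_i \in locked_j$ for some merged $p_j$. This can only happen if $p_i$ itself placed $p_i$ in a set it sent earlier, which is exactly \autoref{line:ckcapx_lockedSelfAdd}; hence $p_i$ was locked at an earlier round. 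I would state this as a small auxiliary claim: $p_k$ appears in any $locked$ set only after $p_k$ executed \autoref{line:ckcapx_lockedSelfAdd}. It is proved by induction over the order of events, since every other change is a union with sets carried by messages. The ``equivalently'' form is the contrapositive combined with monotonicity.

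\textbf{Item 2.} Let $p_k \in locked_i^{-}[r]$ with $p_k \neq p_i$. Then $p_k$ entered $locked_i$ through a merge, either at a round $r' \le r$ or already present from earlier rounds. By Observation~\ref{obs:align} and Observation~\ref{obs:relay}, a round-$r'$ merge imports $locked_j[r'-1]$, that is, $locked_j$ as it stood when $p_j$ closed round $r'-1$. Induction on $r$, using the auxiliary claim, yields that $p_k$ executed \autoref{line:ckcapx_lockedSelfAdd} while closing some round $\le r'-1 \le r-1$. The base case is the first merge that introduces $p_k$ anywhere: at that moment $p_k$ must be the sender, since its own message carries $locked_k[r'-1] \ni p_k$.

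The main obstacle is making the auxiliary ``only self-insertion creates membership'' claim precise in the presence of the catch-up at \autoref{line:ckcapx_roundMismatch2}. For that step I would argue on the global sequence of atomic events rather than on rounds.
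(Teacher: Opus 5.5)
Your proposal is correct and follows essentially the paper's route: monotonicity and single locking from union-only updates plus the guard, item 1 from the locking test, and item 2 from the fact (via Observation~\ref{obs:align}) that a relayed lock set is $locked_j[r'-1]$, so relaying costs at least one round. Your auxiliary claim, that a process first appears in any lock set only through its own self-insertion, is what the paper's phrase ``relay of a set that already contained it'' uses implicitly. It also handles the case $p_i \in locked_i^{-}[r]$, where the locking test is skipped altogether, which the paper's one-line argument for item 1 passes over.
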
 \begin{proof} The first point is the test of \autoref{line:ckcapx_lockedCheck}. For the second, a process enters a lock set either by adding itself (\autoref{line:ckcapx_lockedSelfAdd}) or by relay of a set that already contained it (\autoref{line:ckcapx_lockedUpdate}), and by \autoref{obs:align} a relay carries $locked_j[r-1]$, hence costs at least one round. \end{proof}

\begin{observation}[Survivors keep a majority]\label{obs:majority} 
    If $p_i \in good[r]$ then $|suspect_i[r]| \le t$, so $R_i[r] = \Pi \setminus suspect_i[r]$ has at least $t+1$ elements. 
\end{observation}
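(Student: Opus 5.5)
This follows directly from the crash test at \autoref{line:ckcapx_crash}, so the plan is a short case analysis on how a process ends up in $good[r]$.

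\textbf{Approach.} By definition $good[r] = \Pi \setminus stopped[r]$. So $p_i \in good[r]$ means $p_i$ did not crash at any round $r' \le r$. Two cases arise, according to whether $p_i$ is still running when it closes round $r$ or has already decided.

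\textbf{Case 1: $p_i$ closes round $r$.} The value $suspect_i[r]$ is assigned at \autoref{line:ckcapx_suspectUpdate} and is immediately tested at \autoref{line:ckcapx_crash}. If $|suspect_i[r]| > t$, then $p_i$ executes \prim{crash} while closing $r$, so $p_i \in stopped[r]$, contradicting $p_i \in good[r]$. Hence $|suspect_i[r]| \le t$.

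\textbf{Case 2: $p_i$ decided at some round $r_0 < r$.} Then $p_i$ no longer executes the routine, and $suspect_i$ is frozen at its value from round $r_0$. By the convention of the Notation paragraph, $suspect_i[r] = suspect_i[r_0]$, just as $V_i$ is frozen. Case 1 applied at round $r_0$, where $p_i$ passed the crash test before reaching the decide lines, gives $|suspect_i[r_0]| \le t$. Hence $|suspect_i[r]| \le t$ in this case too.

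\textbf{Conclusion.} Since $suspect_i[r] \subseteq \Pi$ and $N = 2t+1$, we get $|R_i[r]| = N - |suspect_i[r]| \ge 2t+1-t = t+1$.

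\textbf{Main obstacle.} The only delicate point is bookkeeping: making precise that the value of $suspect_i$ at the end of the routine is exactly the value tested at \autoref{line:ckcapx_crash}. Nothing between those two lines modifies it, and both are executed atomically, as assumed. The frozen-value convention must also be extended explicitly from $V_i$ to $suspect_i$ for processes that decided earlier, and for rounds before $p_i$ starts, where $suspect_i = \emptyset$.
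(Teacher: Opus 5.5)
Your proof is correct and follows the same reasoning the paper relies on. The paper gives no separate proof of this observation; it treats it as immediate from the bound $|suspect_i| \le t$ enforced at \autoref{line:ckcapx_crash} together with $N = 2t+1$, and your frozen-value argument also covers, verbatim, a process that suffers a genuine crash-stop before closing $r$ without executing \autoref{line:ckcapx_crash} (the literal definition of $stopped[r]$ leaves such a process in $good[r]$), so you may want to list that case next to the ``already decided'' one.
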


\paragraph{Safety.}

\begin{theorem}[Validity]\label{th:validity} If a process \prim{decide}s $V$, then every $v \in V$ was \prim{propose}d. \end{theorem}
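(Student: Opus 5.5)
The proof is an invariant argument on the sets $V_i$, by induction over the sequence of events each process executes. We show that at every point of an execution, for every process $p_i$, every element of $V_i$ was \prim{propose}d by some process. Since \prim{decide} is only invoked as \algprim{decide}{(V_i)} on lines~\ref{line:ckcapx_decideCrash} and~\ref{line:ckcapx_decidebound}, the claim follows directly from the invariant. In line with the remark after \autoref{ass:sync}, we use no timing assumption, only the protocol text and \autoref{obs:align}. The only facts we need about the channels are that they neither create nor alter messages: every received message $(r_j, V_j, locked_j)$ was previously sent by $p_j$ at \autoref{line:ckcapx_multiSend}.

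\textbf{Writes to $V_i$.} We list every line that writes $V_i$. Initially $V_i = \emptyset$, so the invariant holds vacuously. At \autoref{line:ckcapx_proposevalupdate}, $V_i \gets \{val\}$ is executed inside \algprim{propose}{(val)}, so $val$ is proposed by $p_i$ itself. At \autoref{line:ckcapx_vrecv}, $V_i \gets V_i \cup V_j$, where $V_j$ is the payload of a received message. No other line modifies $V_i$; in particular, the round change routine and the locking lines only read it, and $locked_i$ and $roundLiveset$ are separate variables.

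\textbf{The merge step.} This is the only step needing care, and it requires an induction over the global order of events rather than per process. Order all send and merge events in real time; they are atomic by assumption. By induction, at each send at \autoref{line:ckcapx_multiSend} the payload is a copy of $p_j$'s current $V_j$, which satisfies the invariant at that moment. By the induction hypothesis, a value in a message payload was therefore proposed. Merging such a payload at \autoref{line:ckcapx_vrecv} preserves the invariant for $p_i$.

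\textbf{Well-foundedness.} The induction is well founded because every receive event is preceded in real time by its matching send event. This rules out any circular dependency between messages.
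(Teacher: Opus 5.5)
Your proof is correct and takes the same route as the paper. Both establish the invariant that $V_i$ contains only proposed values: $V_i$ is set by \prim{propose} at \autoref{line:ckcapx_proposevalupdate} and otherwise grows only by merging received payloads at \autoref{line:ckcapx_vrecv}, and an induction shows every message carries only proposed values. The differences are minor: you induct over the real-time order of send and merge events where the paper inducts on rounds, and you state explicitly that channels neither create nor alter messages, which makes your version slightly more careful without changing the argument.
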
 \begin{proof} $V_i$ is initialised with the proposal of $p_i$ (\autoref{line:ckcapx_proposevalupdate}) and grows only with sets received from other processes (\autoref{line:ckcapx_vrecv}). By induction on rounds, every message carries proposed values only. The decided set is $V_i$. \end{proof}

\begin{lemma}[Local knowledge grows]\label{lem:local} 
    If $p_i \in good[r]$ and $r' \le r$, then $V_i[r'] \subseteq V_i[r]$. 
\end{lemma}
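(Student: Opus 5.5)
The argument is a direct inspection of how $V_i$ is modified in \autoref{alg:apx_ckc}, followed by a trivial induction on $r$. The key point is that $V_i$ is only ever assigned at two places. The first is \autoref{line:ckcapx_proposevalupdate}, executed once when $p_i$ calls \prim{propose}, before any round is closed. The second is \autoref{line:ckcapx_vrecv}, where $V_i \gets V_i \cup V_j$. The union at \autoref{line:ckcapx_vrecv} is monotone, and the round change routine (lines~\ref{line:ckcapx_suspectUpdate}--\ref{line:ckcapx_multiSend}) never touches $V_i$. So between the end of round $r'-1$'s closing routine and the end of round $r'$'s closing routine, $V_i$ can only grow. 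This gives $V_i[r'-1] \subseteq V_i[r']$ for every round $r'$ at which $p_i$ actually closes.

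I would proceed in three steps. First, fix $p_i \in good[r]$ and split according to whether and when $p_i$ decides. If $p_i$ has not decided before closing $r$, then since $p_i \notin stopped[r]$ it has not crashed at any round $\le r$. Hence it executes the closing routine for every round $r'' \le r$, and the one-step inclusion chains by induction to $V_i[r'] \subseteq V_i[r]$. Second, if $p_i$ decided at some round $d \le r$, apply the convention $V_i[r''] = V_i[d]$ for $r'' > d$. If $r' \ge d$, the two sets coincide. If $r' < d$, use the first step up to $d$, then equality from $d$ to $r$. Third, confirm that the catch-up path of \autoref{line:ckcapx_roundMismatch2} does not break the argument. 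It simply runs the same routine, so the closing of each round still happens exactly once and in order, with $r_i$ incremented at \autoref{line:ckcapx_nextRound}.

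The only delicate point is bookkeeping rather than mathematics. One must check that $V_i[r']$ is well defined for all $r' \le r$ when $p_i \in good[r]$: $p_i$ did close each such round or decided earlier, and the convention covers the rest. One must also check that a crash at \autoref{line:ckcapx_crash} is the only way a process stops without deciding, and this is excluded by $p_i \in good[r]$. I expect no real obstacle beyond stating these cases cleanly.
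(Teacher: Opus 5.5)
Your proof is correct and uses the same argument as the paper, whose entire proof is the observation that no value is ever removed from $V_i$, since it only grows through the union $V_i \gets V_i \cup V_j$ in the receive handler; your case split on the decision round and your check of the catch-up path just spell out the bookkeeping behind the notation $V_i[r]$. The one step you assert rather than derive is that the overwrite $V_i \gets \{val\}$ in \textsc{propose} happens before $p_i$ closes any round; the paper leaves this equally implicit, and it holds because all processes propose within $\Delta_W$, which is shorter than the round timer, so no round message exists until every process has proposed.
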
 
\begin{proof} No value is ever removed from $V_i$. \end{proof}

\begin{lemma}[Global knowledge shrinks]\label{lem:global} 
    If $r' \le r$, then $V[r] \subseteq V[r']$. 
\end{lemma}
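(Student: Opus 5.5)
It suffices to prove the claim for consecutive rounds, $V[r] \subseteq V[r-1]$, and then chain these inclusions by induction on $r - r'$. Fix a value $v \in V[r]$. By definition there is a process $p_i \in good[r]$ with $v \in V_i[r]$, and we must exhibit some $p_k \in good[r-1]$ with $v \in V_k[r-1]$. Two facts are immediate from the definitions. First, $stopped$ is monotone, so $good[r] \subseteq good[r-1]$ and hence $p_i \in good[r-1]$. Second, if $p_i$ decided at some round $\le r-1$, the convention $V_i[r] = V_i[r-1]$ settles this case with $k = i$.

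In the remaining case, $p_i$ closes round $r$, and $V_i[r]$ is obtained from $V_i[r-1]$ by the merges at \autoref{line:ckcapx_vrecv} during round $r$. The proposal line \autoref{line:ckcapx_proposevalupdate} is executed only once, before round $1$, so it adds nothing here. We distinguish two subcases:
\begin{itemize}
\item If $v \in V_i[r-1]$, take $k = i$.
\item Otherwise $v$ was learned from some merged message of a process $p_j \in M_i[r]$. By \autoref{obs:align}, that message is $p_j$'s round $r$ message and carries $V_j[r-1]$, so $v \in V_j[r-1]$.
\end{itemize}

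The step I expect to be the crux is showing that $p_j \in good[r-1]$, i.e.\ that $p_j$ did not crash at any round $\le r-1$. The round $r$ message is sent at \autoref{line:ckcapx_multiSend}, which is reached while closing round $r-1$, and this happens only if $p_j$ did not execute \prim{crash} at \autoref{line:ckcapx_crash} during that routine. A process that crashes quits and sends nothing further, so $p_j$ also did not crash at any earlier round. Hence $p_j \notin stopped[r-1]$.

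The same argument also handles the possibility that $p_j$ decided at $r-1$ before sending. Deciding quits the instance, so such a $p_j$ would not send a round $r$ message at all. If one assumes instead that it still sends, then it belongs to $good[r-1]$ anyway. Either way, the value-carrying sender lies in $good[r-1]$, so $v \in V[r-1]$, which completes the consecutive-round step and, by induction, the lemma.
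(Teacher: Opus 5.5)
Your proof is correct and follows essentially the same route as the paper's. Both reduce to consecutive rounds. Then either $v$ was already in $V_i[r-1]$, or it was relayed in the round-$r$ message of some $p_j$; that message carries $V_j[r-1]$ and was sent while closing $r-1$, so $p_j \in good[r-1]$. Your separate treatment of a $p_i$ that decided before $r$, and of $p_j$ crashing at earlier rounds, spells out what the paper leaves implicit.
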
 
\begin{proof} 
    It suffices to prove $V[r] \subseteq V[r-1]$. Let $v \in V_i[r]$ with $p_i \in good[r]$. If $v \in V_i[r-1]$ then the condition holds due to $good[r] \subseteq good[r-1]$. Otherwise $p_i$ learns $v$ at $r$ from the round $r$ message of some $p_j$, so $v \in V_j[r-1]$ (\autoref{obs:align} and~\ref{obs:relay}). As $p_j$ sent that message while closing $r-1$, it did not crash at $r-1$, hence $p_j \in good[r-1]$ and $v \in V[r-1]$. 
\end{proof}

\begin{lemma}[Lock implies knowledge completeness]\label{lem:lockedknows} 
    Assume no process decides before $r$, and let $p_i$ become locked at $r$. Then $V_i[r] = V[r]$. 
\end{lemma}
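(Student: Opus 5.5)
Throughout, assume no process decides before $r$ and let $p_i$ become locked at $r$. Since \autoref{line:ckcapx_crash} runs before the lock test, $p_i$ does not crash at $r$, so $p_i\in good[r]$ and $V_i[r]\subseteq V[r]$ by definition. The whole proof is therefore the reverse inclusion $V[r]\subseteq V_i[r]$. I would argue by contradiction: fix $v\in V_k[r]$ for some $p_k\in good[r]$ with $v\notin V_i[r]$. The proof then splits on which disjunct of \autoref{line:ckcapx_lockedCheck} locked $p_i$.

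\emph{Relay case ($locked_i^-[r]\neq\emptyset$).} By \autoref{obs:locked}(2), some $p_j\in locked_i^-[r]$ became locked at some round $r'\le r-1$. I would proceed by induction on $r$, taking the lemma for all earlier rounds as hypothesis. The induction gives $V_j[r']=V[r']$. The lock must have reached $p_i$ by a chain of merges, each costing one round (\autoref{obs:align}). Every link of this chain carries the sender's full $V$ (\autoref{obs:relay}), and local knowledge never shrinks (\autoref{lem:local}). Hence $V[r']\subseteq V_i[r]$. Since $r'\le r$, \autoref{lem:global} gives $V[r]\subseteq V[r']\subseteq V_i[r]$. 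Because this step uses the induction hypothesis, the counting case below must be proved first (it is also the base case).

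\emph{Counting case ($locked_i^-[r]=\emptyset$ and $r>|suspect_i[r]|$).} By \autoref{obs:locked}(1), $locked_i^-[r']=\emptyset$ for every $r'\le r$. Then \autoref{obs:suspect} applies with no exception: at every round $r'\le r$, $p_i$ merged the round-$r'$ message of every process it does not suspect at $r'$, and never hears again from a process once it suspects it.

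I would trace $v$ backwards along a relay chain. Choose processes $q_r=p_k, q_{r-1},\dots$ such that either $q_{s}$ already knew $v$ at $s-1$, or $q_s$ learned $v$ at $s$ from the round-$s$ message of $q_{s-1}$. Each sender in this chain is in $good[s-1]$, exactly as in the proof of \autoref{lem:global}.

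The key claim is that every chain member that knows $v$ at round $s-1$ lies in $suspect_i[s]$. Otherwise $p_i$ would have merged that member's round-$s$ message, so $v\in V_i[s]\subseteq V_i[r]$ by \autoref{lem:local}, contradicting $v\notin V_i[r]$.

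Next I would count. The chain members at the points where $v$ changes hands are distinct processes, since a process learns $v$ at most once. I then want to show that $suspect_i$ must strictly grow at every round $1,\dots,r$. This would give $|suspect_i[r]|\ge r$, contradicting $r>|suspect_i[r]|$.

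This counting is the main obstacle. A process can hold $v$ for several rounds while $p_i$ suspects only that one process, so the chain alone need not produce $r$ distinct suspicions. To close the gap, I would try two things:
\begin{itemize}
\item use the first round $r'\le r$ in which $p_i$ adds no new suspicion (one exists by pigeonhole, because $|suspect_i[r]|<r$), and show that in round $r'$ $p_i$ already hears from every process that could still relay $v$;
\item combine this with \autoref{obs:majority} for the holders of $v$. These holders received from at least $t+1$ processes, and those processes relay $v$ onward to $p_i$ in the next round unless $p_i$ suspects them too.
\end{itemize}

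If these two ingredients do not yield distinct new suspicions at every round, I would strengthen the statement being proved to the Parv\'edy--Raynal form, with $V[r]$ restricted to processes that successfully relay. I would then re-derive the lemma from that form.
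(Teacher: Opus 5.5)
\paragraph{Verdict.} Your reduction to $V[r]\subseteq V_i[r]$ and your relay case are sound, and they match the paper's Case~(ii). Your chain of merges is in fact the right justification there: it is Lemma~\ref{lem:lockcarries} combined with the induction hypothesis and Lemma~\ref{lem:global}. The gap is the counting case, which is the substance of the lemma. You stop at the obstacle, list two strategies without carrying either out, and offer a fallback that would change the definition of $V[r]$ and so prove a different statement. Two ideas are missing.

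\paragraph{First missing idea.} The ``holding'' you worry about cannot happen \emph{inside} a learning chain, only at its end $p_k$. Suppose $q'$ learns $v$ at round $s+1$ by merging the round-$(s+1)$ message of $q$, so $q\notin suspect_{q'}[s]$. Note $q\in good[s]$, since it sent a round-$(s+1)$ message.

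Then $q'$ is not locked by round $s$. Otherwise, for its locking round $y\le s$, the induction hypothesis and Lemma~\ref{lem:global} give $v\in V[s]\subseteq V[y]=V_{q'}[y]$, and $q'$ would not learn $v$ at $s+1$. So the counting case also relies on the induction hypothesis, which is harmless under strong induction on $r$.

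Hence $locked_{q'}^-[s]=\emptyset$, and Observation~\ref{obs:suspect} gives $q\in M_{q'}[s]$. Since $q'$ merged $V_q[s-1]$ without learning $v$, $q$ learned $v$ at \emph{exactly} round $s$.

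So if $p_k$ learned $v$ at round $r'$, there are distinct $q_0,\dots,q_{r'}=p_k$ with $q_s$ learning $v$ at round $s$. By your key claim, each $q_s$ with $s\le r-1$ lies in $suspect_i[s+1]\subseteq suspect_i[r]$. This yields $|suspect_i[r]|\ge r$, the desired contradiction, but only when $r'\in\{r-1,r\}$.

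\paragraph{Second missing idea.} When $r'\le r-2$ (exactly your scenario of $p_k$ holding $v$ for several rounds), no counting is needed, and the hypothesis $r>|suspect_i[r]|$ is not used at all. By Observation~\ref{obs:majority}, pick $p_m\in R_i[r]\cap R_k[r]$.

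If $p_m=p_k$, then $p_i$ merged $V_k[r-1]\ni v$.

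Otherwise, $p_m$ is not locked at any round $\le r-1$: if $p_m\ne p_i$, $p_i$ merged its round-$r$ message, which would have carried that lock into $locked_i^-[r]=\emptyset$. So $p_m\notin locked_k^-[r]$ by Observation~\ref{obs:locked}, and therefore $p_m\in M_k[r]$ by Observation~\ref{obs:suspect}. Thus $p_m$ sent to $p_k$ at round $r$, i.e.\ $p_k\notin suspect_m[r-1]$. As $locked_m^-[r-1]=\emptyset$, $p_m$ merged $p_k$'s round-$(r-1)$ message, which carries $V_k[r-2]\ni v$. This value then reaches $p_i$ through $p_m$'s round-$r$ message, or directly if $p_m=p_i$.

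\paragraph{Summary.} Splitting the counting case by the round at which the holder learned $v$ is precisely the paper's three subcases $r'=r$, $r'=r-1$ and $r'\le r-2$. Your second bullet points toward the intersection argument. As written, however, the counting case, and hence the lemma, is not established.
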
 

\begin{proof} 

    The locking condition is $r_i > suspect_i $ \textbf{or} $locked_i \ne \emptyset$ (\autoref{line:ckcapx_lockedCheck}), hence we have two cases:

    \begin{itemize}[leftmargin=1.1cm]

        \item[\emph{Case (i)}]: $locked_i^{-}[r] = \emptyset$, hence $r > |suspect_i[r]|$. 
        
        Consider a process $p_j \in good[r]$ and a value $v \in V_j[r]$. We show that $v \in V_i[r]$. Assume $r'$ as the round at which $p_j$ learns $v$. Since $p_j \in good[r]$, then $r' \le r$. Specifically, we have two subcases:

        \begin{itemize}
            
        \item \emph{Subcase $r' = r$.} Assume by contradiction that $v \notin V_i[r]$. By \autoref{lem:local}, $p_i$ does not know $v$ at $r-1$. As processes always forward the full value set, it follows that $v$ has passed through a chain of $r'$ distinct processes before reaching $p_j$. As $v \notin V_i[r]$, $p_i$ must suspect at least $r$ processes, i.e., $r \le |suspect_i[r]|$, contradicting $r > |suspect_i[r]|$.

        \item \emph{Subcase $r' = r-1$.} Assume by contradiction as above. Here $p_i$ must suspect $r-1$ processes plus $p_j$, from which it would learn $v$ at $r$ (\autoref{obs:relay}). Hence $r \le |suspect_i[r]|$, again contradicting $r > |suspect_i[r]|$.

        \item \emph{Subcase $r' \le r-2$.} By \autoref{obs:majority} both $R_i = \Pi \setminus suspect_i[r]$ and $R_j = \Pi \setminus suspect_j[r]$ have at least $t+1$ elements, and $N = 2t+1$, so there is some $p_k \in R_i \cap R_j$. If $p_k = p_j$ then by \autoref{obs:relay} $p_j$ must have relayed $v$ to $p_i$ in later rounds, so $v \in V_i[r]$, contradicting the case assumption. 
        Otherwise, $p_k$ is not locked at any round $\le r-1$ because $p_k \in R_i$, so $p_k \notin locked_j^{-}[r]$ (\autoref{obs:locked}) and $p_k \in M_j[r]$: $p_j$ merged the round $r$ message of $p_k$, so $p_k$ sent to $p_j$ at $r$ and $p_j \notin suspect_k[r-1]$. As $locked_k^{-}[r-1] = \emptyset$, \autoref{obs:suspect} gives $p_j \in M_k[r-1]$, so $p_k$ merged the round $r-1$ message of $p_j$, which carries $V_j[r-2] \ni v$. Hence $p_k \in suspect_i[r]$, contradicting $p_k \in R_i$.

        \end{itemize}

        \item [\emph{Case (ii)}]: $p_k \in locked_i^{-}[r]$ 
        
        $p_k$ must have locked through \textit{Case (i)} or received a lock from some $p_j$ that locked through \textit{Case (i)}. In either case, $V_k[r-1] = V[r-1]$, $V_i[r] \supseteq V_k[r-1]$ by \autoref{obs:relay}. By \autoref{lem:global}, $V[r] \subseteq V[r-1]$, so $V_i[r] \supseteq V[r]$. The reverse inclusion is always true, so $V_i[r] = V[r]$.
        
    \end{itemize}

\end{proof}

\begin{lemma}\label{lem:frozen} Let $p_i \in good[r]$ be locked at some round $y \le r$, and assume no process decides before $y$. Then $V_i[r] = V[r] = V[y]$. 
\end{lemma}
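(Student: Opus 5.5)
We argue by induction on $r \ge y$, showing the two equalities $V_i[r] = V[r]$ and $V[r] = V[y]$ together.

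\emph{Base case $r = y$.} Since no process decides before $y$, \autoref{lem:lockedknows} applies directly and gives $V_i[y] = V[y]$.

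\emph{Inductive step.} Assume $V_i[r-1] = V[r-1] = V[y]$ and that $p_i \in good[r]$. We chain three inclusions. First, \autoref{lem:global} gives $V[r] \subseteq V[r-1] = V[y]$. Second, $V_i[r-1] \subseteq V_i[r]$ by \autoref{lem:local}. Third, $V_i[r] \subseteq V[r]$ holds by definition, because $p_i \in good[r]$. Putting these together,
\[ V[y] = V_i[r-1] \subseteq V_i[r] \subseteq V[r] \subseteq V[y], \]
so all four sets are equal. This closes the induction.

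Some bookkeeping is needed once $p_i$ decides at a round after $y$. By the convention in the Notation paragraph, $V_i$ is then frozen, so \autoref{lem:local} still gives $V_i[r-1] \subseteq V_i[r]$. The hypothesis $p_i \in good[r]$ covers processes that have already decided, since $good$ excludes only crashed processes. Other processes may decide after $y$, but \autoref{lem:global} does not rely on the absence of decisions. Its proof only uses the fact that a value learned at round $r$ was carried by a round-$r$ message from a process in $good[r-1]$. A decided process keeps its frozen $V$ and remains in $good$, so it keeps contributing to $V[r]$, and the inclusion $V[r] \subseteq V[r-1]$ still holds.

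The one delicate point is that the chain never lets $V[r]$ grow above $V[y]$. If we lean only on the two lemmas, nothing further about locking is needed after round $y$: $p_i$ can only accumulate values, and the global pool can only shrink, so once the two sets agree at $y$ they are squeezed together from then on. I expect the main risk to lie in whether \autoref{lem:global} truly survives the frozen-value convention for processes that decided. I would check this point explicitly in the write-up.
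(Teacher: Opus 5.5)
Your proof is correct and is essentially the paper's. The paper writes the single chain $V[y] = V_i[y] \subseteq V_i[r] \subseteq V[r] \subseteq V[y]$ using \autoref{lem:lockedknows}, \autoref{lem:local}, the definition of $V[r]$, and \autoref{lem:global}, which are exactly your ingredients. Your induction on $r$ is redundant, since \autoref{lem:local} and \autoref{lem:global} already hold for arbitrary $r' \le r$; your worry about decided processes is settled by the frozen-value convention, under which the first case of the proof of \autoref{lem:global} applies.
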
 

\begin{proof} $V[y] = V_i[y] \subseteq V_i[r] \subseteq V[r] \subseteq V[y]$, by \autoref{lem:lockedknows}, \autoref{lem:local}, the definition of $V[r]$, and \autoref{lem:global}. \end{proof}

\begin{lemma}\label{lem:decidelocked} A process is locked when it decides. \end{lemma}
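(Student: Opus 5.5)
Lemma~\ref{lem:decidelocked} follows from a case analysis on which decision line fires, using only the control flow of the round change routine and Observation~\ref{obs:locked}. The two decision lines are \autoref{line:ckcapx_decideCrash} and \autoref{line:ckcapx_decidebound}; in both, $p_i$ closes some round $r$ and has already passed \autoref{line:ckcapx_crash} without crashing, so $|suspect_i[r]| \le t$. We show that by the end of the locking block of round $r$ we have $p_i \in locked_i$. Since $locked_i$ never shrinks (Observation~\ref{obs:locked}), $p_i$ is then locked when it decides.

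\emph{Case \autoref{line:ckcapx_decideCrash}.} This line is the \textbf{elif} branch of the test $p_i \notin locked_i$ at \autoref{line:ckcapx_selfLockedCheck}. It therefore executes only when $p_i \in locked_i$ already held at the start of the block, because $p_i$ either added itself in an earlier round or received a lock set containing itself. The first alternative is immediate. For the second, Observation~\ref{obs:locked}(2) says a relayed membership of $p_i$ originates from $p_i$ adding itself in an earlier round. Either way $p_i$ is locked.

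\emph{Case \autoref{line:ckcapx_decidebound}.} Here $r_i > t$. If $p_i \in locked_i$ at the start of the block, we are done. Otherwise the \textbf{if} branch runs, and the guard of \autoref{line:ckcapx_lockedCheck} holds because $r_i > t \ge |suspect_i[r]|$, the last inequality coming from passing \autoref{line:ckcapx_crash}. So $p_i$ adds itself at \autoref{line:ckcapx_lockedSelfAdd} before reaching \autoref{line:ckcapx_decidebound}, which is evaluated after the locking block in the same routine.

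The main obstacle is fixing the intended reading of the \textbf{elif}/\textbf{if} structure and of ``locked''. We read ``$p_i$ locked'' as $p_i \in locked_i$, and we assume \autoref{line:ckcapx_decidebound} is sequenced after the locking block; if the pseudocode were read differently, the second case would have to be argued via the timing of $r_i$. We also rely on the fact that a crash at \autoref{line:ckcapx_crash} terminates the routine, which is what justifies $|suspect_i| \le t$ at the decision lines.
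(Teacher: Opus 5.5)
Your proof is correct and is essentially the paper's. The \textbf{elif} at \autoref{line:ckcapx_decideCrash} is reached only when $p_i \in locked_i$ already holds. A process reaching \autoref{line:ckcapx_decidebound} has passed \autoref{line:ckcapx_crash}, so $|suspect_i| \le t < r_i$ forces the self-add at \autoref{line:ckcapx_lockedSelfAdd} before the decision line is evaluated. One small remark: your appeal to Observation~\ref{obs:locked}(2) in the first case is unnecessary and does not apply as written, since that item explicitly excludes $p_k = p_i$; the \textbf{elif} guard alone gives what you need.
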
 \begin{proof} \autoref{line:ckcapx_decideCrash} is the \textbf{elif} branch of \autoref{line:ckcapx_selfLockedCheck}, so it is reached only when $p_i \in locked_i$ already. A process reaching \autoref{line:ckcapx_decidebound} closes round $t+1$ without crashing, so $|suspect_i[t+1]| \le t < t+1$; the test of \autoref{line:ckcapx_lockedCheck} then holds and $p_i$ is locked at $t+1$ at the latest, before \autoref{line:ckcapx_decidebound} is evaluated. \end{proof}

\begin{lemma}[Locks carry knowledge]\label{lem:lockcarries}
If $p_k \in locked_i[r]$, then $p_k$ became locked at some round $y \le r$ and
$V_k[y] \subseteq V_i[r]$.
\end{lemma}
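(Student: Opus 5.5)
We argue by induction on the round $r$ at which $p_k$ enters $locked_i$, tracking how it got there. By the algorithm, $p_k$ can enter $locked_i$ in only two ways. Either $p_k = p_i$ adds itself at \autoref{line:ckcapx_lockedSelfAdd}, or $p_i$ merges at \autoref{line:ckcapx_lockedUpdate} a round-$r'$ message of some $p_j$ with $p_k \in locked_j[r'-1]$. We prove the slightly stronger statement: for every $r$ and every $p_k \in locked_i^{-}[r] \cup locked_i[r]$, $p_k$ became locked at some $y \le r$ with $V_k[y] \subseteq V_i[r]$. The induction is on $r$, for all processes $p_i$ simultaneously.

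\emph{Self case.} If $p_k = p_i$, then $p_i$ became locked at some round $y \le r$ by \autoref{obs:locked} (it locks at most once, and $locked_i$ never shrinks). Then $V_i[y] \subseteq V_i[r]$ by \autoref{lem:local}. If $p_i$ has decided in between, the convention $V_i[r'] = V_i[r]$ after a decision covers it. If $p_i$ crashed, it simply has no later $locked_i[r]$ to speak of.

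\emph{Relay case.} First, by \autoref{obs:align}, a merge of $p_j$'s message happens only while $p_i$ is in round $r'$, and that message carries $locked_j[r'-1]$ and $V_j[r'-1]$. Second, \autoref{obs:relay} gives $V_j[r'-1] \subseteq V_i[r'] \subseteq V_i[r]$, using \autoref{lem:local} and the monotonicity of $locked_i$ to pass from $r'$ to $r$. Third, apply the induction hypothesis to $p_j$ at round $r'-1 < r$: $p_k$ became locked at some $y \le r'-1 \le r$ with $V_k[y] \subseteq V_j[r'-1]$. Chaining these inclusions yields $V_k[y] \subseteq V_i[r]$.

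Base case: at round $0$ (or before any close) every lock set is empty, so the claim holds vacuously.

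The main obstacle is bookkeeping rather than an idea: one must check that the message really carries $V_j[r'-1]$ and $locked_j[r'-1]$ taken at the same instant. This holds because both are read at \autoref{line:ckcapx_multiSend} within the same atomic round-change routine. One must also check that the induction hypothesis is applied at a strictly smaller round, so the induction is well founded. That is exactly the ``relay costs one round'' point of \autoref{obs:locked}.
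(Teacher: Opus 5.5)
Your proof is correct and is essentially the paper's argument: induction on $r$, a case split between self-addition at \autoref{line:ckcapx_lockedSelfAdd} and arrival through a merged round-$r'$ message, the induction hypothesis applied to the relaying $p_j$ at round $r'-1$, and chaining with \autoref{obs:relay} and \autoref{lem:local}. The only difference is cosmetic: you case on the round at which $p_k$ first entered $locked_i$ and then use monotonicity, whereas the paper has a separate ``already in $locked_i[r-1]$'' case handled by the hypothesis on $p_i$ itself; your strengthening to $locked_i^{-}[r]$ is harmless but unnecessary, since $locked_i^{-}[r] \subseteq locked_i[r]$.
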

\begin{proof}
By induction on $r$; lock sets are initially empty. There are three ways for $p_k$ to be in
$locked_i[r]$. If $p_k = p_i$ added itself at $r$ (\autoref{line:ckcapx_lockedSelfAdd}), then
$y = r$ and the claim is trivial. If $p_k$ was already in $locked_i[r-1]$, the induction
hypothesis and \autoref{lem:local} give $V_k[y] \subseteq V_i[r-1] \subseteq V_i[r]$. Otherwise
$p_k$ arrived with the round $r$ message of some $p_j$, which carries $locked_j[r-1]$ and
$V_j[r-1]$ (\autoref{obs:align}). Applying the induction hypothesis to $p_j$ at $r-1$ yields
$y \le r-1$ and $V_k[y] \subseteq V_j[r-1]$, and $V_j[r-1] \subseteq V_i[r]$ by
\autoref{obs:relay}.
\end{proof}

\begin{theorem}[Uniform agreement]\label{th:agreement}
No two processes decide different sets of values. This holds for any number of faulty
processes.
\end{theorem}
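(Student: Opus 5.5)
We fix the \emph{first} decision round and show that every later decision, by any process and through either decision line, yields the same set. Let $d$ be the smallest round at which some process decides, and let $p_a$ be a process deciding at $d$. By \autoref{lem:decidelocked}, $p_a$ is locked at some round $y_a \le d$. No process decides before $d \ge y_a$, so \autoref{lem:frozen} applies and gives $V_a[d] = V[d] = V[y_a]$. More generally, define $y^\ast$ as the first round at which any process becomes locked. Every decider is locked, so $y^\ast \le d$ and no decision precedes $y^\ast$. Then \autoref{lem:lockedknows} and \autoref{lem:frozen} give $V[r] = V[y^\ast]$ for every $r \ge y^\ast$ at which some process locked at or before $r$ is still in $good[r]$. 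The target statement is: every process that decides, at any round $r \ge d$, decides exactly $V^\ast := V[y^\ast]$.

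The first step handles a process $p_b$ deciding at $r \ge d$. We have $p_b \in good[r]$, and by \autoref{lem:decidelocked} $p_b$ is locked at some $y_b \le r$. To apply \autoref{lem:frozen} we need that no process decides before $y_b$. If $y_b \le d$, this holds by minimality of $d$. If $y_b > d$, \autoref{lem:frozen} cannot be applied directly, because after $d$ some deciders have frozen their $V$ and stopped sending, which could in principle break the lock/knowledge argument. For this case we use \autoref{lem:lockcarries} instead. The locked-majority condition gives $|locked_b| > t$, and the round-$(t+1)$ condition gives $|suspect_b| \le t$. In either case the locked set of $p_b$, or the set of processes it still hears from, meets any majority. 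So $p_b$ either has in $locked_b$ some process $p_k$ that locked at a round $\le d$ (and hence $V^\ast = V_k[y_k] \subseteq V_b$ by \autoref{lem:lockcarries}), or relays from such a process.

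The second step is the reverse inclusion $V_b[r] \subseteq V^\ast$. Global knowledge only shrinks (\autoref{lem:global}), and deciders are kept in $good$ with frozen $V_i$. Hence $V_b[r] \subseteq V[r] \subseteq V[y^\ast] = V^\ast$. Combining the two inclusions gives $V_b[r] = V^\ast$.

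The main obstacle is the case $y_b > d$: showing that $p_b$ really inherits the knowledge of an early-locked process, and that $V$ stays frozen after the first decision. Here I would argue via quorum intersection. The first decider $p_a$, if it decided by the locked-majority rule, saw at least $t+1$ locked processes, all locked at rounds $\le d-1$ (\autoref{obs:locked}). Any later decider has either at least $t+1$ locked processes in its locked set, or at least $t+1$ unsuspected processes. Each of these sets must intersect $p_a$'s locked majority, since $N = 2t+1$. Through that common process and \autoref{lem:lockcarries}, the later decider obtains $V^\ast$. The subcase where $p_a$ decided by the round-$(t+1)$ rule needs a separate check that all processes locked by round $t+1$ share $V^\ast$, via \autoref{lem:frozen} at $y^\ast$. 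No step assumes $f \le t$, so the agreement holds for any number of faulty processes, as the statement claims.
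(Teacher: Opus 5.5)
Your overall architecture matches the paper's: you fix the first decision round $d$, settle deciders locked by round $d$ with Lemma~\ref{lem:frozen}, and handle a later decider by intersecting the first decider's locked majority with the later decider's locked or unsuspected set, then invoking Lemma~\ref{lem:lockcarries}. The genuine problem is your anchor $V^\ast := V[y^\ast]$. For a process locked at $y_i$ that is still in $good[r]$, Lemma~\ref{lem:frozen} gives $V[r]=V[y_i]$, not $V[r]=V[y^\ast]$. Lemma~\ref{lem:global} only says global knowledge shrinks, and it can shrink strictly once the processes that locked at $y^\ast$ quit at line~\ref{line:ckcapx_crash}.

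Concretely, take $N=5$, $t=2$. Process $p_5$ delivers its round-1 message only to $p_1$. $p_1$ hears everyone in round~1 and locks, so $y^\ast=1$ and $V^\ast=V[1]=\{v_1,\dots,v_5\}$. In round~2, $p_1$ and $p_5$ omit every send and receive, so each suspects $4>t$ processes and quits. Then $p_2,p_3,p_4$ never learn $v_5$, lock at round~3, and decide $\{v_1,\dots,v_4\}=V[3]\neq V^\ast$. So your target statement is false. With it fall the claim $V[r]=V[y^\ast]$ in your first paragraph, the identity $V^\ast=V_k[y_k]$ in your first step, and the ``separate check'' that processes locked by round $t+1$ share $V^\ast$.

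The repair is to anchor at the first decision, as the paper does: the common value is $V[d]=V_a[d]$. For any $p_k$ locked at $y_k\le d$ (no decision precedes $y_k$), Lemma~\ref{lem:lockedknows} and Lemma~\ref{lem:global} give $V[d]\subseteq V[y_k]=V_k[y_k]$, which is the inclusion you actually need. The reverse inclusion is $V_b[r]\subseteq V[r]\subseteq V[d]$. Your case $y_b\le d$ then closes by the sandwich $V[r]\subseteq V[d]\subseteq V[y_b]=V[r]$.

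Two smaller points. First, turning ``$p_k$ is not suspected by $p_b$'' into ``$p_b$ obtained $V_k[y_k]$'' needs Observation~\ref{obs:suspect}, namely $R_b[r]=\{p_b\}\cup locked_b^{-}[r]\cup M_b[r]$. So either $p_k=p_b$, or Lemma~\ref{lem:lockcarries} applies, or $p_b$ merged $p_k$'s round-$r$ message carrying $V_k[r-1]\supseteq V_k[y_k]$. Since $locked_b[r]\subseteq R_b[r]$, intersecting with $R_b[r]$ alone covers both of $p_b$'s decision rules. Second, the round-$(t+1)$ subcase for $p_a$ is vacuous. Every process of $good[t+1]$ decides while closing round $t+1$, so a decision after $d$ forces $d\le t$, and then $p_a$ must have used the locked-majority rule. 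If instead $d=t+1$, all decisions happen at $d$ and are covered by Lemma~\ref{lem:frozen}.
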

\begin{proof}
Let $r$ be the first round at which a process decides, and let $p_i$ decide at $r$. By
\autoref{lem:decidelocked} $p_i$ is locked at some round $\le r$, so $p_i$ decides
$V_i[r] = V[r]$ (\autoref{lem:frozen}). The same argument applies to any process deciding at
$r$, so all decisions taken at $r$ are identical.

Let now $p_j$ decide at $r'' > r$. Every process of $good[t+1]$ decides while closing $t+1$
(\autoref{line:ckcapx_decidebound}), so $r'' \le t+1$ and $r \le t$; at round $r \le t$ the test
of \autoref{line:ckcapx_decidebound} fails, so $p_i$ decided at
\autoref{line:ckcapx_decideCrash} with $|locked_i| > t$, i.e. $|locked_i[r]| \ge t+1$. Since
$p_j \in good[r'']$, \autoref{obs:majority} gives $|R_j[r'']| \ge t+1$, and $N = 2t+1$, so there
is some $p_k \in locked_i[r] \cap R_j[r'']$. By \autoref{lem:lockcarries}, $p_k$ became locked at
some $y \le r$, and by \autoref{lem:lockedknows} and \autoref{lem:global},
$V[r] \subseteq V[y] = V_k[y]$.

It remains to show $V_k[y] \subseteq V_j[r'']$. By \autoref{obs:suspect},
$R_j[r''] = \{p_j\} \cup locked_j^{-}[r''] \cup M_j[r'']$. If $p_k = p_j$, then
$V_k[y] \subseteq V_j[r'']$ by \autoref{lem:local}. If $p_k \in locked_j^{-}[r'']$, the same
follows from \autoref{lem:lockcarries}. If $p_k \in M_j[r'']$, its round $r''$ message carries
$V_k[r''-1] \supseteq V_k[y]$ (\autoref{obs:align} and \autoref{lem:local}), which $p_j$ merges.
In all cases $V[r] \subseteq V_j[r''] \subseteq V[r''] \subseteq V[r]$, so $p_j$ decides $V[r]$
as well.

The proof uses only $N = 2t+1$, the bound $|suspect_i| \le t$ enforced at
\autoref{line:ckcapx_crash}, and the early decision threshold $|locked_i| > t$. No assumption on
the actual number of faulty processes is made.
\end{proof}

\paragraph{Liveness.}

\begin{lemma}\label{lem:trust} No correct process ever suspects a correct process. Consequently $|suspect_i[r]| \le f$ for every correct $p_i$ and every $r$, and no correct process crashes. \end{lemma}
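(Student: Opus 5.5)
The plan is an induction on rounds, proving jointly: for every round $r$ and all correct $p_i, p_j$, $p_j \notin suspect_i[r]$. The consequences are then immediate. Since at most $f$ processes are faulty, a correct $p_i$ suspects only faulty processes, so $|suspect_i[r]| \le f$. Under the standing assumption $f \le t$ this gives $|suspect_i[r]| \le t$, hence the test of \autoref{line:ckcapx_crash} never fires and no correct process crashes.

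For the inductive step, fix correct $p_i, p_j$ and assume neither suspected the other up to round $r-1$, and that neither crashed. Since $p_i \notin suspect_j[r-1]$, $p_j$ executes \autoref{line:ckcapx_multiSend} while closing round $r-1$ and sends its round $r$ message to $p_i$. Because $p_j$ is correct, this send is not omitted. Because $p_i$ is correct, the receipt is not omitted either. By \autoref{ass:sync}, the message reaches $p_i$ before $p_i$ closes round $r$, and the two processes are at most one round apart.

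We then apply \autoref{obs:align}. If $p_i$ is still in round $r-1$ when the message arrives, \autoref{line:ckcapx_roundMismatch2} first advances it to round $r$. The guard of \autoref{line:ckcapx_syncEnforce} then holds, since $r_j = r_i$ and, by the induction hypothesis, $p_j \notin suspect_i[r-1]$. So $p_j$ is added to $roundLiveset$ at \autoref{line:ckcapx_livesetUpdate2}, i.e. $p_j \in M_i[r]$. By \autoref{obs:suspect}, $p_j \notin suspect_i[r]$. The base case is the same argument with empty suspect sets.

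A subtle point arises when $p_j$ has already decided. After deciding it stops sending, so $p_i$ will not merge its later messages. In that case I would use the $locked_i$ exception built into \autoref{obs:suspect}. By \autoref{lem:decidelocked}, $p_j$ was locked when it decided at some round $y$. Its last message, sent at round $y+1$, carries $locked_j[y] \ni p_j$ and is merged by $p_i$ by the argument above. Hence $p_j \in locked_i$ from then on, and $roundLiveset$ is reset to include it at \autoref{line:ckcapx_livesetReset}, so $p_j$ is never suspected. This step is the main obstacle. The delicate part is the decision round itself: does $p_j$ still send its round $y+1$ message, or does it quit before \autoref{line:ckcapx_multiSend}? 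If it quits earlier, I would instead argue that $p_j$'s round $y$ message already carried $p_j$ in its lock set, or that $p_i$ decides in the same round. This is checked by case analysis on the decide lines.
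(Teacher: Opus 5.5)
Your proposal is essentially the paper's proof: the same induction on rounds, the same merge argument via \autoref{ass:sync} giving $p_j \in M_i[r]$, and the same use of the $locked_i$ exception of \autoref{obs:suspect} for a correct process that has already decided. The one point to fix is the one you flag: a process deciding at round $z$ quits before \autoref{line:ckcapx_multiSend}, so it never sends a round $z+1$ message. The paper settles your case analysis by noting that \autoref{line:ckcapx_decideCrash} is the \textbf{elif} of \autoref{line:ckcapx_selfLockedCheck}, so there are only two cases: either $p_j$ was locked at some $y<z$ and its round $y+1 \le z$ message already carries $p_j$ in its lock set, or it became locked at $z$ itself, which forces a decision at \autoref{line:ckcapx_decidebound} with $z=t+1$, where $p_i$ also decides and no later round is closed.
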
 \begin{proof} By induction on $r$; assume that up to round $r-1$ no correct process suspected a correct one, so no correct process crashed. Let $p_i$ and $p_c$ be correct. While both are running there is no omission between them, and by \autoref{ass:sync} the round $r$ message of $p_c$ reaches $p_i$ before $p_i$ closes $r$, so $p_c \in M_i[r]$. Assume instead that $p_c$ decided at some $z < r$ and stopped. By \autoref{lem:decidelocked} it is locked at some $y \le z$. If $y < z$, its round $y+1 \le z$ message carries $p_c \in locked_{p_c}[y]$ and is merged by $p_i$, so $p_c \in locked_i$ from then on and the exception of \autoref{obs:suspect} applies. If $y = z$, then $p_c$ decided at \autoref{line:ckcapx_decidebound} while closing $t+1$, so $z = t+1$, $p_i$ also decides while closing $t+1$, and no round $r > z$ is closed. In all cases $p_c \notin suspect_i[r]$, hence $|suspect_i[r]| \le f \le t$ and the test of \autoref{line:ckcapx_crash} never holds for $p_i$. 

\end{proof}

\begin{theorem}[Termination]\label{th:termination} 
    Every correct process decides. \end{theorem}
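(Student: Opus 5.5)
The plan is to show that a correct process $p_i$ can only leave the protocol by deciding, and that it cannot run forever. There are exactly three exits from the round change routine: \prim{crash} at \autoref{line:ckcapx_crash}, and \prim{decide} at \autoref{line:ckcapx_decideCrash} or \autoref{line:ckcapx_decidebound}.

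\textbf{Step 1: no correct process crashes.} By \autoref{lem:trust}, a correct process never suspects a correct one. Hence $|suspect_i[r]| \le f \le t$ at every round $r$, so the test of \autoref{line:ckcapx_crash} never holds for $p_i$.

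\textbf{Step 2: rounds keep advancing.} The timer restarts at every round change (\prim{restart-timer}), and \prim{timeout} fires after every interval $\Delta_I + \Delta_W + \Delta_D$. So, unless $p_i$ decides first, it closes rounds $1, 2, \dots$ in succession. Catch-up through \autoref{line:ckcapx_roundMismatch2} only accelerates this and never skips the body of the routine. In particular $p_i$ eventually closes round $t+1$.

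\textbf{Step 3: $p_i$ decides by round $t+1$.} While closing round $t+1$, Step 1 gives $|suspect_i[t+1]| \le t < t+1 = r_i$, so the test of \autoref{line:ckcapx_lockedCheck} holds. This means $p_i$ is either already locked or becomes locked at that moment. Then \autoref{line:ckcapx_decidebound}, which is evaluated unconditionally after the \textbf{elif}, fires because $r_i = t+1 > t$. This reproduces the argument of \autoref{lem:decidelocked}. So $p_i$ decides at round $t+1$ at the latest, unless it already decided earlier at \autoref{line:ckcapx_decideCrash}; in either case it decides.

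The main obstacle is Step 2, specifically the indexing convention of $r_i$. The first round closed has $r_i = 0$, so "closing round $t+1$" must be matched carefully with the value $r_i > t$ actually tested. I would also check that a correct process is never stopped by a round mismatch, i.e.\ that it never receives a message two rounds ahead from a correct peer that would make it discard information. \autoref{ass:sync} ensures correct processes stay within one round of each other, so this reduces to that assumption together with \autoref{lem:trust}.

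For the early-stopping bound $\min(f+2,t+1)$, I would additionally argue as follows. Each faulty process can make $|suspect_i|$ grow at most once, so by round $f+1$ the condition $r_i > |suspect_i|$ holds and $p_i$ is locked. All correct processes, at least $t+1$ of them, then relay their locks in round $f+2$, which gives $|locked_i| > t$. However, termination itself needs only the $t+1$ bound above.
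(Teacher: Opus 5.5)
Your proof is correct and follows the paper's own argument: \autoref{lem:trust} rules out the crash at \autoref{line:ckcapx_crash}, the timer (with \autoref{ass:sync}) makes a correct process close one round after another, and so it decides at \autoref{line:ckcapx_decidebound} when closing round $t+1$ unless it decided earlier. The locking detour in your Step~3 and the early-stopping paragraph are not needed, because \autoref{line:ckcapx_decidebound} is evaluated unconditionally; your indexing concern is settled by the paper's convention that closing round $r$ means running the routine with $r_i = r$.
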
 
    
\begin{proof} 
    By \autoref{lem:trust} a correct process never crashes at \autoref{line:ckcapx_crash}, and by \autoref{ass:sync} its timer makes it close one round after the other. It therefore closes round $t+1$ unless it decided earlier, and decides at \autoref{line:ckcapx_decidebound}. 
\end{proof}